\documentclass[lettersize,journal]{IEEEtran}
\usepackage{amsmath}
\usepackage{amsfonts}
\usepackage{algorithmic}
\usepackage{array}
\usepackage{amssymb}
\usepackage{textcomp}
\usepackage{stfloats}
\usepackage{url}
\usepackage{verbatim}
\usepackage{graphicx}
\usepackage{cite}

\usepackage{upgreek}
\usepackage{pifont}

\usepackage{soul}
\usepackage{multirow}
\usepackage[table]{xcolor}
\PassOptionsToPackage{hyphens}{url}
\usepackage{hyperref}
\definecolor{darkred}{rgb}{0.8, 0.0, 0.0}
\definecolor{darkgreen}{rgb}{0.0, 0.5, 0.0} 

\usepackage[ruled]{algorithm2e}
\usepackage{dsfont}
\usepackage{xcolor}
\usepackage{subcaption}
\allowdisplaybreaks[4]

\usepackage{makecell}
\usepackage{enumitem}
\usepackage{amsthm}
\newtheorem{theorem}{Theorem}

\newtheorem{remark}{Remark}
\newtheorem{proposition}{Proposition}
\newtheorem{assumption}{Assumption}
\newtheorem{lemma}{Lemma}

\ifodd 0
\newcommand{\rev}[1]{{\color{blue}#1}} 
\newcommand{\newrev}[1]{{\color{red}#1}} 
\else
\newcommand{\rev}[1]{#1}
\newcommand{\newrev}[1]{#1} 
\fi

\begin{document}

\title{Update the Unseen Only: Minimizing AoI for Collaborative Perception through Online Learning}

\author{Yanan Ma, Zhuoyi Zhao, Zhengru Fang, Haonan An, Xianhao Chen,~\IEEEmembership{Member,~IEEE}, \\
and Yuguang Fang,~\IEEEmembership{Fellow,~IEEE}
\thanks{
The work was supported in part by the JC STEM Lab of Smart City funded by The Hong Kong Jockey Club Charities Trust under Contract 2023-0108, in part by the Research Grants Council of the Hong Kong SAR, China (Project No. CityU 11216324), and in part by the Hong Kong SAR Government under the Global STEM Professorship.
The work of X. Chen was supported in part by the Research Grants Council of Hong Kong under Grant 27213824 and CRS HKU702/24.
}
\thanks{Y. Ma, Z. Fang, H. An, and Y. Fang are with Hong Kong JC STEM Lab of Smart City and the Department of Computer Science, City University of Hong Kong, Hong Kong, China. (e-mail: yananma8-c@my.cityu.edu.hk, zhefang4-c@my.cityu.edu.hk, haonanan2-c@my.cityu.edu.hk, my.fang@cityu.edu.hk.)}

\thanks{Z. Zhao and X. Chen are with the Department of Electrical and Computer Engineering, The University of Hong Kong, Hong Kong, China. (e-mail: zhuoyijoeyzhao@gmail.com, xchen@eee.hku.hk.)}%
}



\maketitle

\begin{abstract}
While collaborative perception (CP) enhances the safety of autonomous driving, limited bandwidth can cause severe shared data staleness in CP systems. Existing age-of-information (AoI) minimization policies are not well suited for CP, as they overlook the fact that a vehicle’s AoI decreases not only through updates from the source (i.e., a base station) but also through the vehicle’s \textit{local sensing}. To address this issue, we propose a mobility-aware AoI minimization framework for CP that explicitly accounts for vehicles' dynamic sensing ranges. We first derive a closed-form expression for the long-term time average sum AoI within a considered region, accommodating an ever-changing vehicle population and their dynamic sensed areas. 
Based on this characterization, we develop Local-sensing-aware Max-Weight Scheduling (LocMW), an online learning algorithm designed for sensor information broadcast from a source to vehicles under unknown environmental statistics and delayed observations. We provide performance guarantees demonstrating that LocMW achieves a sublinear cumulative excess AoI compared to the optimal stationary randomized benchmark. 
Extensive simulations using vehicular trajectory datasets and 3D perception tasks demonstrate that our LocMW policy substantially outperforms competing baselines, reducing the time-averaged sum AoI by up to 31.6\% and improving mAP detection accuracy by up to 16.3\%.
\end{abstract}

\begin{IEEEkeywords}
Collaborative perception, age of information (AoI),  restless multi-armed bandit (RMAB), vehicular networks, online learning
\end{IEEEkeywords}

\section{Introduction}
\IEEEPARstart{T}{he} realization of Level-5 autonomous driving requires vehicles to perceive dynamic traffic environments with high precision. However, a single autonomous vehicle is vulnerable to various occlusions in dense urban scenarios, as it is fundamentally limited by the line-of-sight nature of onboard geometric sensors, such as cameras and LiDAR. To overcome this limitation,  collaborative perception (CP) has emerged as a promising solution \cite{fang2024pacp, ma2026birdcast, hu2025collaborative}. For instance, in an infrastructure-assisted CP system, a base station (BS) or roadside unit equipped with camera/LiDAR sensors can broadcast sensor information to surrounding vehicles to complement their local views~\cite{he2021vi}. By integrating observations from vehicles and road infrastructure, CP eliminates the blind spots of individual vehicles, thus improving the safety of autonomous driving~\cite{hu2025collaborative, chen2024vehicle, fang2024pacp, ma2026birdcast, ma2026sense4fl}.  

Despite its perception benefits, CP is constrained by the limited bandwidth available for sensor data transmissions. Prior works have addressed the communication bottleneck in CP by optimizing resource allocation based on traditional quality-of-service (QoS) metrics, such as network throughput \cite{fang2024pacp}, spectral efficiency, and communication volume \cite{hu2022where2comm}. Nevertheless, these metrics focus on transmission efficiency rather than on the staleness of information from the vehicles' perspectives, though the latter is a more important objective for real-time sensing systems.

\begin{figure}
    \centering
    \includegraphics[width=1.0\linewidth]{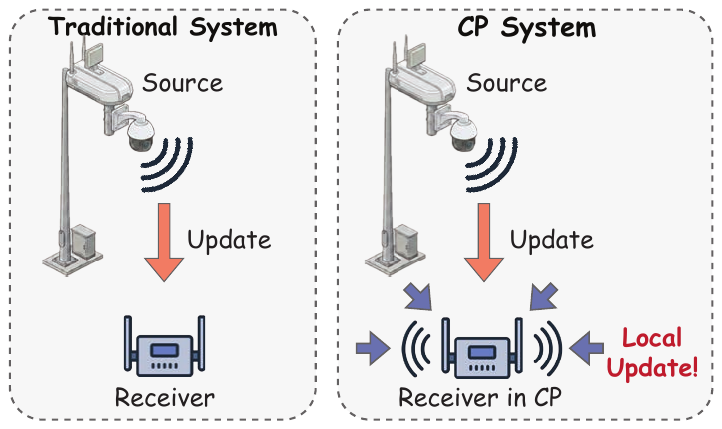}
    \caption{Comparison of information update mechanisms in traditional systems versus CP systems. In a traditional system, AoI decreases for the receiver that relies exclusively on direct transmissions from the source. In a CP system, the receiver is equipped with onboard sensors that allow it to actively perceive its surroundings, which also leads to AoI decrease.}
    \label{fig:comparison}
    \vspace{-0.0cm}
\end{figure}

To maintain information freshness, the age of information (AoI) metric has been widely adopted in sensor networks, which measures information freshness from the destination's perspective~\cite{kadota2018scheduling, yates2021age, kadota2019scheduling, kadota2019minimizing, fang2025r, han2026spatiotemporal, zhu2026timeliness, zhao2025optimizing1}. Optimizing AoI for CP ensures the delivery of fresh sensor data in autonomous driving, thereby enhancing driving safety. Yet, previous research efforts on AoI are ill-suited to CP systems. Specifically, prior work implicitly \textit{assumes that receivers lack sensing capabilities}, because their AoI decreases only upon receiving updates from a transmitter. In contrast, in CP systems, vehicles (or drones or robots) actively perceive their local environments (see Fig.~\ref{fig:comparison}). For instance, if a vehicle cannot see an occluded area, a traditional AoI-minimization scheduling policy may transmit information about this area. However, due to vehicle mobility and environmental changes, the vehicle may see this area directly, implying that data transmissions, if scheduled, are redundant. Conventional AoI scheduling policies that ignore receivers' local sensing capabilities can therefore lead to significantly suboptimal performance in CP systems.

The mobility and sensing dynamics in CP introduce unique challenges for AoI-minimization scheduling policies.  First, BS update scheduling depends not only on the current AoI, but also on the set of currently present vehicles and their local sensing coverage. Jointly characterizing and optimizing these time-varying states is challenging. This is because vehicles may enter or leave a target region over time, and their local sensing ranges change dynamically due to mobility, viewpoints, and occlusions. Second, the BS must make scheduling decisions with delayed knowledge of each vehicle's sensing conditions, because occlusion information is available only after perception and uplink feedback, both of which incur non-negligible delays. These challenges lead to two key research questions:
\begin{itemize}
\item \textbf{Q1.} \textit{How can we characterize a tractable AoI objective that captures both the time-varying receiver population and their dynamic local sensing ranges?}
\item \textbf{Q2.} \textit{How can we minimize the formulated AoI under unknown environmental parameters and delayed network status?}
\end{itemize}


To answer these questions, we develop a unified local-sensing-aware online scheduling framework for infrastructure-assisted CP. On the modeling side, we characterize the population of vehicles that are interested in a specific area but unable to directly observe it as a temporally correlated stochastic process. By exploiting the dynamics of this unobserving population, we derive a closed-form expression for the time-average sum AoI, accounting for the time-varying receiver population and dynamic local sensing ranges. 
On the optimization side, delayed state observations and unknown environmental parameters render the scheduling problem a partially observable restless multi-armed bandit (PO-RMAB) problem, which is generally PSPACE-hard. To address this challenge, we propose the Local-sensing-aware Max-Weight scheduling (LocMW) policy for real-time update scheduling, a learning-aided online policy that combines projected ridge estimation of environmental parameters, certainty-equivalent prediction over the observation-delay window, and Lyapunov-drift-based Max-Weight scheduling.
We show that when the system parameters are known, LocMW achieves a time-average sum AoI no older than the optimal stationary randomized benchmark and within a factor of two of the mean-field lower bound. 
With online learning, LocMW incurs a sublinear cumulative excess AoI relative to the optimal stationary randomized benchmark.
This is the first work to address the AoI minimization problem by explicitly accounting for receivers’ dynamic local sensing capabilities. While we specifically focus on vehicular CP systems in this paper, the proposed methodology can be broadly extended to AoI minimization problems in other types of freshness-critical CP systems, such as drone swarms and robotic networks.

The main contributions of this paper are summarized as follows:
\begin{itemize}
    \item We propose the first AoI minimization framework that explicitly accounts for receivers’ local sensing capabilities and the time-varying unobserving populations in CP systems. Based on this framework, we derive a closed-form characterization of the time-average sum AoI over a target region, revealing how mobility-driven local sensing reshapes the information-staleness dynamics.

    \item Through a mean-field relaxation, we derive an analytical lower bound on the time-average sum AoI. We further characterize the optimal stationary randomized benchmark and show that it is within a constant factor of this lower bound.
    
    \item We design LocMW scheduling, an online learning policy for sensor information broadcasting to navigate observation delays and unknown environmental dynamics. We also show that it incurs a sublinear cumulative excess AoI 
    against the optimal stationary randomized benchmark.

    \item We conduct extensive simulations using real-world vehicular trajectory datasets (pNEUMA and FLUID) and a V2X CP dataset (V2X-Sim) to evaluate our framework. The results demonstrate that LocMW consistently outperforms other baselines by reducing information staleness and improving perception performance, achieving up to a 37.5\% reduction in AoI and a 6.79\% gain in mean average precision (mAP).
\end{itemize}

The remainder of this paper is organized as follows. Section \ref{sec:related_work} reviews related work on AoI scheduling and resource management in CP. Section \ref{sec:system_model} presents the system model and problem formulation. Section \ref{sec:aoi_def} characterizes the time-average sum AoI in closed form. Sections \ref{sec:lower_bound} and \ref{sec:randomized_policy} detail the analytical lower bound and the optimal stationary randomized benchmark, respectively. Section \ref{sec:online_learning} develops the LocMW algorithm and establishes its performance guarantee. Finally, Section \ref{sec:simulation} reports the simulation results, and Section \ref{sec:conclusion} concludes the paper.

\section{Related Work}
\label{sec:related_work}

\subsection{Age of Information}
AoI has become a standard metric for quantifying information freshness from the destination's perspective~\cite{kadota2018scheduling, sun2017update, xu2022aoi, abd2019role, zhao2025optimizing1}. A substantial body of research has investigated AoI minimization under various communication, scheduling, and system constraints. Kadota \textit{et al.}~\cite{kadota2018scheduling} formulated a discrete-time scheduling problem over unreliable channels and developed randomized, Max-Weight, and Whittle-index policies for minimizing the expected weighted-sum AoI. Sun \textit{et al.}~\cite{sun2017update} demonstrated that the zero-wait policy is not necessarily age-optimal and formulated average age-penalty minimization as a constrained semi-Markov decision process, based on which optimal causal update policies were derived. Ji \textit{et al.}~\cite{ji2024age} considered age-optimal packet scheduling under long-term resource constraints and delayed feedback, and proposed a low-complexity greedy policy that minimizes the immediate expected Lagrangian cost. Under constrained transmission rates and imperfect feedback, Zhu \textit{et al.}~\cite{zhu2026age} developed a Lyapunov-optimization-based drift-plus-penalty policy for systems with Bernoulli traffic. Tsai \textit{et al.}~\cite{tsai2023distribution} further studied update-through-queue systems with random, unknown delays and subsequently designed online algorithms to adaptively learn the optimal waiting time.
In addition to traditional AoI formulations, new AoI metrics have also been introduced to capture the usefulness of status updates. This includes age of changed information (AoCI), which accounts for changes in the underlying source state \cite{wang2021age}, and age of incorrect information (AoII) \cite{maatouk2020age}, which incorporates the discrepancy between the receiver's estimate and the true system state. However, these studies fail to account for users' local sensing capabilities and dynamic mobility, in which a \textit{time-varying} set of vehicles \textit{refresh} their information through onboard sensing.

Beyond traditional scheduling, recent studies have applied advanced learning and optimization techniques to manage AoI in highly dynamic environments. Chen \textit{et al.} \cite{chen2020age} optimized long-term AoI performance in a Manhattan grid vehicle-to-vehicle (V2V) network and proposed a decentralized deep reinforcement learning algorithm. Emami \textit{et al.} \cite{emami2024age} studied the trade-off between UAV mobility and data freshness as a mean-field game and developed a hybrid proximal policy optimization scheme to jointly optimize trajectories and communication schedules. 
Nevertheless, these schemes still do not consider users' local sensing capabilities.

In a nutshell, existing AoI-minimization frameworks cannot be directly applied to CP systems for two reasons: 1) they assume that AoI can only decrease via network transmissions without considering local sensing, and 2) they assume that the set of receivers is fixed, which does not hold in dynamic mobile networks such as vehicular networks.

\subsection{Collaborative Perception}
The transmission of high-dimensional sensory data over bandwidth-limited V2X links remains a primary bottleneck for CP. To alleviate this, a growing body of literature has investigated dynamic resource allocation, user selection, and task-oriented scheduling for communication-efficient CP \cite{ma2026birdcast, fang2024pacp, hu2022where2comm, wu2025fresh2comm, wang2023age, tao2025directed}.
To enhance communication efficiency, recent studies focus on selective transmission strategies that filter redundant data. For example, Where2comm \cite{hu2022where2comm} and How2comm \cite{yang2023how2comm} reduce communication overhead by transmitting high-uncertainty spatial regions. Other frameworks, such as PACP \cite{fang2024pacp} and Directed-CP \cite{tao2025directed}, further improve CP performance by dynamically prioritizing data based on perception correlation or directional interests under varying wireless conditions. To address the downlink communication bottleneck in vehicle-to-infrastructure (V2I) systems, Ma \textit{et al.} \cite{ma2026birdcast} developed Birdcast, which maximizes network-wide utility by jointly optimizing bird's-eye-view (BEV) feature selection and multicast grouping.

Recent studies have incorporated AoI metrics to manage data freshness in CP systems \cite{fang2026aoi, fang2025r, han2026spatiotemporal}. To mitigate feature misalignment caused by spatiotemporal heterogeneity, Han \textit{et al.} \cite{han2026spatiotemporal} proposed a fusion framework that exploits network synchronization and AoI to compensate for clock drifts and communication delays. Wu \textit{et al.} \cite{wu2025fresh2comm} developed an AoI-driven optimization framework that jointly controls computing and communication delay. Fang \textit{et al.} \cite{fang2025r} formulated an Age of Perceived Targets (AoPT) minimization problem to prioritize high-quality and task-relevant data for critical targets. To handle asynchronous multi-source updates in CP systems, Wang \textit{et al.} \cite{wang2023age} designed a scheduling policy that minimizes channel utilization while satisfying AoI constraints. Zhu \textit{et al.} \cite{zhu2026timeliness} introduced a timeliness-aware prioritized scheduling algorithm for multi-region CP systems, leveraging Lyapunov optimization to balance AoI reduction and communication costs. However, the aforementioned works have not incorporated users’ local sensing capabilities for AoI minimization, which are salient characteristics of CP systems.

\begin{figure}
    \centering
    \includegraphics[width=0.92\linewidth]{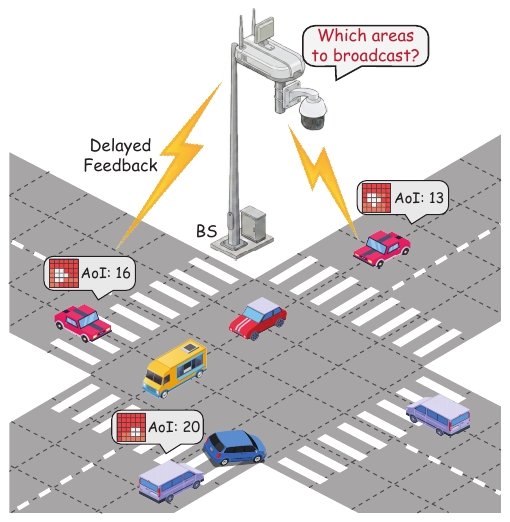}
    \caption{Illustration of our framework for an infrastructure-assisted CP system. Users, i.e., vehicles, perceive their immediate surroundings via onboard sensors, instantaneously resetting their AoI to 0. Operating on \textit{delayed} feedback, the BS dynamically schedules broadcast updates for up to $K$ areas per slot to complement the users' ongoing local sensing and minimize the network's total AoI.}
    \label{fig:system}
    \vspace{-0.0cm}
\end{figure}

\section{System Model and Problem Formulation} \label{sec:system_model}
\subsection{System Model}
We consider a real-time CP system comprising a BS and a dynamic set of users (e.g., vehicles), as illustrated in Fig.~\ref{fig:system}. The BS coverage region is partitioned into $B$ distinct areas, denoted by the set $\mathcal{B}=\{1,2,\ldots, B\}$, and time is discretized into a slotted horizon $\mathcal{T}=\{1,2,\ldots, T\}$. Both the BS and users are equipped with sensing capabilities (e.g., cameras or LiDAR) to perceive the environment. We assume that the BS continuously monitors the entire region of interest, whereas each user observes only a \textit{time-varying} subset of the region due to mobility and physical occlusions. By adopting appropriate modulation and coding schemes, we assume a broadcast update from the BS can reach all users interested in the corresponding area~\cite{ma2026birdcast}. As illustrated in Fig. \ref{fig:aoi_evolution}, the AoI evolution of user $i$ for area $b$ at time slot $t$, $a_{b,i}(t)$, comprises the following cases.

\begin{itemize}
\item \textit{Decreases to 1}: If user $i$ cannot directly observe area $b$ but receives a broadcast update from BS at slot $t$, its AoI for area $b$ becomes $1$ at slot $t+1$.

\item \textit{Resets to 0}: If user $i$ directly observes area $b$ or loses interest in area $b$, the AoI is set to $0$ at slot $t+1$.

\item \textit{Increases by 1}: Otherwise, the AoI increases by 1 in each time slot.
\end{itemize}

\begin{figure}
    \centering
    \includegraphics[width=0.92\linewidth]{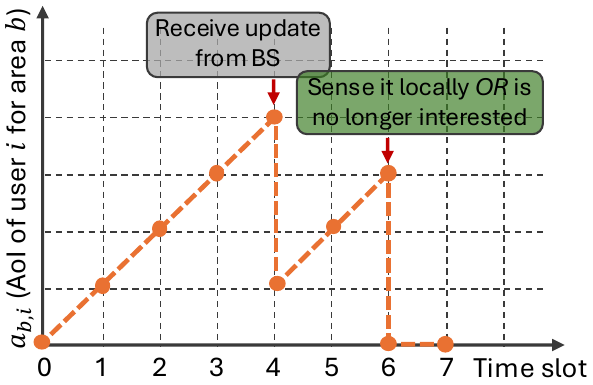}
    \caption{Evolution of the AoI $a_{b,i}$ of user $i$ for area $b$ over time slots. The AoI increases by 1 in each time slot if the user is interested in area $b$ but cannot sense it. AoI reduces to 1 upon receiving a new update from the BS and resets to 0 if the user senses the area or loses interest in it.}
    \label{fig:aoi_evolution}
    \vspace{-0.0cm}
\end{figure}

\textbf{User Demand Modeling.} AoI evolution depends on user demands, i.e., when they can see the area and when they lose interest. Let $N_{b}(t)$ denote the instantaneous user demand, defined as the number of users in the network \textit{unable} to directly observe but \textit{interested in} area $b$ at time $t$. $N_{b}(t)$ changes according to user mobility and environmental dynamics. To capture the temporal correlation and the non-negative integer nature of this population, we model $N_{b}(t)$ as a first-order integer-valued auto-regressive (INAR(1)) process \cite{al1987first, brijs2008studying}:
\begin{equation}
    N_{b}(t)=\rho_{b}\circ_{c}N_{b}(t-1)+\omega_{b}(t), \label{eq:AR}
\end{equation}
where $\rho_{b} = \rho_{b}^{int}(1-\rho_{b}^{vis}) \in(0,1)$ represents the probability that a user remains in the interested-but-unobserving state for area $b$. Here, $\rho_{b}^{int}$ and $\rho_{b}^{vis}$ denote the probabilities of the user maintaining interest and possessing direct observation of area $b$, respectively. 
The symbol $\circ_{c}$ is a dependent binomial thinning operator, \newrev{defined as $\rho_{b} \circ_{c}N_{b}(t-1) = \sum_{i=1}^{N_{b}(t-1)}X_{i}(t-1)$ with $X_{i}(t-1)\sim \text{Bernoulli}(\rho_{b})$ being identically distributed but potentially correlated Bernoulli random variables} (this correlation accounts for the spatial dependencies typically exhibited by proximate users), and $\omega_{b}(t)$ is an independent, integer-valued random variable representing the influx of users newly transitioning into the interested-but-unobserving state for area $b$ at slot $t$ due to mobility or physical occlusions.
Considering physical road capacity constraints, there exist finite constants $N_{\max}$ and $\mu_{\max}$ such that $N_{b}(t)\le N_{\max}$ and $\omega_{b}(t) \le \mu_{\max}$ for all $b \in \mathcal{B}$.


\textbf{Delayed Status Upload.} To track information staleness, each user reports its location, a perception-quality or visibility indicator\footnote{This indicator can be directly derived from the user's perception confidence \cite{hu2022where2comm}, which is generated by onboard perception modules without incurring extra computing overhead.}, and its local AoI for each target area to the BS. Due to uplink latency, however, the BS observes these reports with a constant delay of $d \geq 1$ time slots and must operate on delayed observations when making scheduling decisions.
Accordingly, the information available to the BS at the beginning of slot $t$ is $\mathcal{H}(t)=\{N_{b}(\tau), A_{b}(\tau): b\in\mathcal{B}, \tau\leq t-d\}$. The BS constructs $N_b(\tau)$ by counting the number of users that are interested in, but unable to directly observe area $b$ and aggregating their individual AoI reports, such that $A_{b}(\tau)= \sum_{i=1}^{N_{b}(\tau)} a_{b,i}(\tau)$. The evolution of $a_{b,i}$ will be detailed in Section \ref{sec:aoi_def}. 
Note that $N_{b}(\tau)$ and $A_{b}(\tau)$ are derived from uploaded binary matrices and scalar values from users, respectively, which are communication-efficient.
The key notations are summarized in Table \ref{table:notations}.

\begin{table}[!t]
\centering
\caption{Summary of important notations.}
\label{table:notations}
\renewcommand{\arraystretch}{1.4}
\setlength{\tabcolsep}{2mm}
\begin{tabular}{@{}p{0.9cm}p{7.0cm}@{}}
\hline
\textbf{Notation} & \textbf{Description} \\ 
\hline
$\mathcal{T}$ & The set of time slots \\
$\mathcal{B}$ & The set of distinct areas \\
$N_{b}(t)$ & The instantaneous user demand for area $b$ at slot $t$\\ 
$\rho_{b}$ & The probability of a user remaining in the interested-but-unobservable state for area $b$\\
$\omega_{b}(t)$  & The number of new interested-but-unobservable users for area $b$ at slot $t$ \\
$\lambda_{b}$ & The steady-state mean user demand for area $b$\\
$\mu_b$ & The mean arrival rate of interested-but-unobservable users for area $b$\\ 
$d$  & The observation delay (in time slots) \\
$\mathcal{H}(t)$ & The delayed information available to the BS  \\
$a_{b,i}(t)$ & AoI of user $i$ for area $b$ at slot $t$ \\
$A_b(t)$ & The aggregate AoI for area $b$ at slot $t$\\
$u_b(t)$ & The binary scheduling decision for area $b$ at slot $t$\\
$K$ & The maximum number of areas can be scheduled for updates \\
$\overline A_b$ & The long-term time-average AoI for area $b$ \\
$\hat{N}_b(t)$ & The predicted user demand for area $b$ at slot $t$ \\
$\hat{A}_b(t)$ & The predicted aggregate AoI of area $b$ at slot $t$\\
\hline
\end{tabular}
\end{table}

\subsection{Problem Formulation}
At the beginning of each time slot $t$, the BS generates fresh information for all $B$ areas via its equipped sensors and then selects \textit{a subset of areas to broadcast}. Let $u_b(t)\in\{0,1\}$ denote the binary scheduling variable, where $u_b(t)=1$ if the BS broadcasts an update for area $b$ in slot $t$, and $u_b(t)=0$ otherwise. Our objective is to find the optimal scheduling policy $\pi^*$, i.e., the sequence of scheduling actions $\{u_b(t)\}_{b \in \mathcal{B}}$, that minimizes the long-term time-average sum AoI across the network subject to a bandwidth constraint. By defining $\overline A_b = \limsup_{T\to\infty} \frac{1}{T} \sum_{t=1}^{T} \mathbb{E} \left[ A_b(t) \right]$, the corresponding optimization problem is formulated as
\begin{subequations}\label{p:min_AoI}
\begin{align}
    \min_{\pi}~~ & \sum_{b=1}^B \overline A_b \label{eq:obj} \\
    \text{s.t.} ~~ & \sum_{b=1}^{B} u_b(t) \le K, \quad \forall t \ge 1, \label{eq:c1} \\
    & u_b(t) \in \{0,1\}, \quad \forall b \in \mathcal{B}, ~\forall t \ge 1, \label{eq:c2}
\end{align}
\end{subequations}
where Constraint \eqref{eq:c1} indicates that at most $K$ areas can be scheduled for broadcast in any single time slot.

As previously discussed, the true system state, specifically, the current demand and aggregate AoI, $\{N_{b}(t), A_{b}(t)\}_{b\in\mathcal{B}}$, is not observable by the BS when making decisions. Instead, the BS must rely on delayed information and historical actions to schedule updates. This naturally leads to a partially observable restless multi-armed bandit (PO-RMAB) problem. In general, PO-RMAB problems are PSPACE-hard, and determining an optimal policy entails exponential complexity and memory requirements \cite{mundhenk2000complexity, shao2021partially}.

Moreover, different from traditional AoI scheduling problems, even evaluating the objective function \eqref{eq:obj} is challenging in our case. \ding{182} First, information freshness in CP is intertwined with users’ time-varying perception ranges: whenever a user can directly sense a particular area, the AoI of that area is effectively reset to zero even \textit{without} receiving an update. \ding{183} Second, the user population in the network \textit{evolves over time}, which further complicates the analysis of the time-average sum AoI. Consequently, before solving problem \eqref{p:min_AoI}, we must first develop a tractable expression for the objective \eqref{eq:obj}.

\section{Characterization of Time-average Sum AoI}
\label{sec:aoi_def}
In this section, we derive a tractable expression for the time-average sum AoI in \eqref{eq:obj}, which forms the basis for the subsequent development of our solution. Based on this expression, we transform the original AoI minimization into an equivalent weighted AoI reduction maximization problem.

\subsection{Characterization of Time-average AoI}
Recall that $a_{b,i}(t)$ denotes the AoI of user $i$ for area $b$ at slot $t$\footnote{The index $i$ identifies users only within slot $t$ and is not tracked consistently over time.}, and $A_b(t)=\sum_{i=1}^{N_b(t)} a_{b,i}(t)$ represents the instantaneous aggregate AoI of all interested-but-unobserving users. Since our objective is to minimize the time-average sum AoI $\overline A_b$, we will show that it is sufficient to directly characterize $A_b(t)$, without explicitly tracking the evolution of each individual $a_{b,i}(t)$. Given the scheduling decision $u_b(t)$, the aggregate AoI $A_{b}(t)$ evolves as
\begin{equation}
    A_{b}(t+1)= \underbrace{\sum_{i=1}^{N_{b}(t)}X_i(t)\big[(1-u_b(t))a_{b,i}(t)+1\big]}_{\text{staying users}} + \underbrace{\omega_{b}(t+1)}_{\text{new arrivals}}. \label{eq:aoi_def}
\end{equation}
It comprises two components based on user population dynamics. \ding{182} The first term accounts for \textit{staying users}, who remain in the interested-but-unobserving state. The indicator $X_i(t) \sim \text{Bernoulli}(\rho_{b})$ denotes whether user $i$ persists in this state at slot $t+1$. For these users, if BS schedules an update, i.e., $u_b(t)=1$, their AoI resets to $1$ at slot $t+1$; otherwise, it increments by $1$. \ding{183} The second term accounts for \textit{new arrivals}, representing a batch of $\omega_{b}(t+1)$ users newly transitioning into this state. Since these users could either observe area $b$ locally at slot $t$ or have just entered the network, their initial AoI at slot $t$ is $0$.


Let $\lambda_b \triangleq \lim_{t \to \infty} \mathbb{E}[N_b(t)]$ denote the steady-state mean demand for area $b$ and let $\mu_b \triangleq \mathbb{E}[\omega_b(t)]$ denote the mean influx of users, where $\mu_b \le \mu_{\max}$. Under the INAR(1) mobility model, we have $\mu_b = (1-\rho_b)\lambda_b$. We now establish the one-step evolution and the stability of $A_b(t)$ and then characterize the tractable expression of the time-average AoI $\overline{A}_{b}$.

\begin{lemma}\label{lemma:one_step_drift}
   Given the system filtration $\mathcal{F}_t$ containing all system states up to time slot $t$, the conditional one-step evolution of the aggregate AoI for area $b$ is given by
   \begin{equation} \label{eq:one_step_aoi}  
        \mathbb{E}[A_b(t+1) | \mathcal{F}_t] = \rho_b \big(1 - u_b(t)\big) A_b(t) + \rho_b N_b(t) + \mu_b. 
    \end{equation}  
\end{lemma}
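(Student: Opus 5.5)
We take conditional expectations directly in the recursion \eqref{eq:aoi_def}, treating the two components (staying users and new arrivals) separately and using linearity of expectation.

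\textbf{Measurability.} Given $\mathcal{F}_t$, the quantities $N_b(t)$, the individual ages $a_{b,i}(t)$ (hence $A_b(t)$), and the decision $u_b(t)$ are all known. The decision is measurable because the BS decides using $\mathcal{H}(t)\subseteq\mathcal{F}_t$ together with possibly independent internal randomization. If the policy is randomized, we first condition additionally on $u_b(t)$; the formula is affine in $u_b(t)$, so nothing changes.

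\textbf{Staying users.} The sum has a $\mathcal{F}_t$-measurable number of terms $N_b(t)$ and $\mathcal{F}_t$-measurable coefficients $c_i\triangleq(1-u_b(t))a_{b,i}(t)+1$. The key point is that each thinning indicator satisfies $\mathbb{E}[X_i(t)\mid\mathcal{F}_t]=\rho_b$. The correlation among the $X_i(t)$ is irrelevant, because only first moments enter a linear expression. We need that the persistence indicators for slot $t\to t+1$ are drawn after $\mathcal{F}_t$ with marginal $\text{Bernoulli}(\rho_b)$, independent of the past and of the users' current ages. This is implicit in the INAR(1) definition \eqref{eq:AR}, and we state it explicitly as the modeling assumption being used. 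Then
\[
\mathbb{E}\Big[\sum_{i=1}^{N_b(t)}X_i(t)c_i\,\Big|\,\mathcal{F}_t\Big]=\rho_b\sum_{i=1}^{N_b(t)}c_i=\rho_b(1-u_b(t))A_b(t)+\rho_b N_b(t).
\]

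\textbf{New arrivals.} Since $\omega_b(t+1)$ is independent of $\mathcal{F}_t$ with mean $\mu_b$, its conditional expectation is $\mu_b$. Summing the two contributions gives \eqref{eq:one_step_aoi}.

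\textbf{Main obstacle.} The only delicate step is justifying $\mathbb{E}[X_i(t)\mid\mathcal{F}_t]=\rho_b$ for every $i$ despite the dependent thinning. In particular, users' ages must carry no information about their persistence, so the thinning is independent of the individual $a_{b,i}(t)$ given $N_b(t)$. Boundedness ($N_b\le N_{\max}$, hence finite ages within finite time) guarantees that all expectations are finite, so linearity can be applied without integrability concerns.
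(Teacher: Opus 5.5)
Your proposal is correct and matches the paper's proof in Appendix~\ref{proof:one_step_drift}: both take $\mathbb{E}[\cdot\mid\mathcal{F}_t]$ of \eqref{eq:aoi_def}, treat $N_b(t)$, $a_{b,i}(t)$ and $u_b(t)$ as $\mathcal{F}_t$-measurable, and use linearity with $\mathbb{E}[X_i(t)\mid\mathcal{F}_t]=\rho_b$ and $\mathbb{E}[\omega_b(t+1)\mid\mathcal{F}_t]=\mu_b$. Your remarks that only first moments of the correlated thinning indicators enter, and that the indicators must carry no information about individual ages, spell out precisely what the paper asserts more briefly as independence of $X_i(t)$ from $\mathcal{F}_t$.
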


\begin{IEEEproof}
The proof can be found in Appendix \ref{proof:one_step_drift}
\end{IEEEproof}

\begin{lemma}\label{lemma:stability}
    Under any admissible scheduling policy, the aggregate AoI \(A_b(t)\) is mean-rate stable, i.e., $\lim_{T\to\infty}\frac{\mathbb{E}[A_b(T+1)]}{T}=0$, which implies 
    \begin{equation}
        \lim_{T\to\infty}\frac{1}{T}\Big(\mathbb{E}[A_b(T+1)]-\mathbb{E}[A_b(1)]\Big) =0.
    \end{equation}
\end{lemma}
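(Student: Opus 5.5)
The plan is to show that $\mathbb{E}[A_b(t)]$ is uniformly bounded in $t$ under any admissible policy. Mean-rate stability then follows at once, and so does the second display, since $\mathbb{E}[A_b(1)]$ is a finite constant.

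First take expectations in Lemma~\ref{lemma:one_step_drift} over $\mathcal{F}_t$. Because $u_b(t)\in\{0,1\}$ and $A_b(t)\ge 0$, we have $\rho_b(1-u_b(t))A_b(t)\le \rho_b A_b(t)$. Together with $N_b(t)\le N_{\max}$ and $\mu_b\le\mu_{\max}$, this gives the linear recursion
\begin{equation*}
\mathbb{E}[A_b(t+1)] \le \rho_b\,\mathbb{E}[A_b(t)] + \rho_b N_{\max} + \mu_{\max}.
\end{equation*}
This bound holds for every policy, including history-dependent or randomized ones. The reason is that the conditional identity is stated for an arbitrary $\mathcal{F}_t$-measurable action, and we only use the worst case $u_b(t)=0$.

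Next, unroll the recursion. With $c\triangleq\rho_bN_{\max}+\mu_{\max}$, this yields
\begin{equation*}
\mathbb{E}[A_b(t)] \le \rho_b^{t-1}\mathbb{E}[A_b(1)] + c\sum_{k=0}^{t-2}\rho_b^k \le \mathbb{E}[A_b(1)] + \frac{c}{1-\rho_b},
\end{equation*}
where we use $\rho_b\in(0,1)$. We also need $\mathbb{E}[A_b(1)]<\infty$. This holds because $A_b(1)=\sum_{i\le N_b(1)}a_{b,i}(1)$ with $N_b(1)\le N_{\max}$, and the initial ages are finite. If needed, we adopt the natural convention that the initial ages are zero or bounded, matching the arrival convention used in \eqref{eq:aoi_def}.

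Finally, dividing by $T$ gives $0\le \mathbb{E}[A_b(T+1)]/T\le \big(\mathbb{E}[A_b(1)]+c/(1-\rho_b)\big)/T\to 0$, which is mean-rate stability. Subtracting $\mathbb{E}[A_b(1)]/T\to0$ then yields the stated limit.

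The only delicate point is justifying the tower-property step for an arbitrary admissible policy. Specifically, $u_b(t)$ must be determined by information available before the thinning variables $X_i(t)$ and the arrival $\omega_b(t+1)$ are realized. Under the delayed-information model, $u_b(t)$ depends only on $\mathcal{H}(t)\subseteq\mathcal{F}_t$ and possibly on independent randomization, so the conditional form of Lemma~\ref{lemma:one_step_drift} applies and the bound is immediate.
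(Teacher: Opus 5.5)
Your proof is correct and takes essentially the same route as the paper. Both arguments derive the linear recursion $\mathbb{E}[A_b(t+1)]\le\rho_b\,\mathbb{E}[A_b(t)]+\rho_b N_{\max}+\mu_{\max}$ from Lemma~\ref{lemma:one_step_drift}, unroll it to the uniform bound $\mathbb{E}[A_b(1)]+(\rho_b N_{\max}+\mu_{\max})/(1-\rho_b)$, and divide by $T$. The one difference is in your favour: you drop the factor $(1-u_b(t))\le 1$ directly inside the recursion for an arbitrary policy, whereas the paper derives the bound only for the passive policy $u_b(t)\equiv 0$ and then simply asserts that every admissible policy is dominated by it.
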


\begin{IEEEproof}
 The proof can be found in Appendix \ref{proof:stability}.
\end{IEEEproof}

\begin{proposition}\label{prop:characterization} 
Based on Lemma \ref{lemma:one_step_drift} and \ref{lemma:stability}, the time-average aggregate AoI for area $b$ is characterized by
    \begin{equation} \label{eq:prop2}
        \overline{A}_{b}=\frac{\lambda_{b}}{1-\rho_{b}}-\frac{\rho_{b}}{1-\rho_{b}}\overline{u_{b}A_{b}},
    \end{equation}
where $\overline{u_bA_b}\triangleq\lim_{T\to\infty}\frac{1}{T}\sum_{t=1}^{T} \mathbb{E}\left[u_b(t)A_b(t)\right]$.
\end{proposition}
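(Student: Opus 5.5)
Taking total expectations in Lemma~\ref{lemma:one_step_drift} and telescoping over $t=1,\dots,T$ should give the result. By the tower property,
\begin{equation*}
\mathbb{E}[A_b(t+1)] = \rho_b\,\mathbb{E}[A_b(t)] - \rho_b\,\mathbb{E}[u_b(t)A_b(t)] + \rho_b\,\mathbb{E}[N_b(t)] + \mu_b .
\end{equation*}
Summing over $t=1,\dots,T$ and dividing by $T$, the left-hand side becomes $\frac{1}{T}\sum_{t=1}^{T}\mathbb{E}[A_b(t)] + \frac{1}{T}\big(\mathbb{E}[A_b(T+1)]-\mathbb{E}[A_b(1)]\big)$. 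The boundary term vanishes as $T\to\infty$ by Lemma~\ref{lemma:stability}. Rearranging then gives
\begin{equation*}
(1-\rho_b)\frac{1}{T}\sum_{t=1}^{T}\mathbb{E}[A_b(t)] = \rho_b\frac{1}{T}\sum_{t=1}^{T}\mathbb{E}[N_b(t)] + \mu_b - \rho_b\frac{1}{T}\sum_{t=1}^{T}\mathbb{E}[u_b(t)A_b(t)] + o(1).
\end{equation*}

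The demand term is handled using the INAR(1) model \eqref{eq:AR}. Taking expectations gives $\mathbb{E}[N_b(t)]=\rho_b\mathbb{E}[N_b(t-1)]+\mu_b$, and correlation among the $X_i$ does not affect the mean. Since $\rho_b<1$, this converges geometrically to $\lambda_b=\mu_b/(1-\rho_b)$, so its Cesàro average also tends to $\lambda_b$. The constant terms then combine as $\rho_b\lambda_b+\mu_b=\rho_b\lambda_b+(1-\rho_b)\lambda_b=\lambda_b$. Dividing by $1-\rho_b>0$ yields \eqref{eq:prop2}.

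The main obstacle is rigor about limits. $\overline{A}_b$ is defined with a $\limsup$, and $\overline{u_bA_b}$ is written as a limit that need not exist for an arbitrary policy. I would handle this by noting that the identity above holds for every finite $T$ up to an $o(1)$ term. Consequently $\frac1T\sum_t\mathbb{E}[A_b(t)]$ converges if and only if $\frac1T\sum_t\mathbb{E}[u_bA_b]$ does, and their limsup and liminf correspond exactly, with the sign flip taken into account. I would state the proposition under the convention that the limit $\overline{u_bA_b}$ exists, as implicitly assumed in its definition. Under that convention the $\limsup$ in $\overline{A}_b$ becomes a true limit, and the equality follows.
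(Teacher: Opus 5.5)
Your proposal is correct and follows the paper's proof: take total expectations in Lemma~\ref{lemma:one_step_drift}, average and telescope over $t=1,\dots,T$, use Lemma~\ref{lemma:stability} to remove the boundary term, and substitute $\mu_b=(1-\rho_b)\lambda_b$. The paper simply applies ``the definition of the long-term time average'' after taking a limit superior. Your argument adds two points it glosses over: that the Ces\`aro average of $\mathbb{E}[N_b(t)]$ actually converges to $\lambda_b$, and how the $\limsup$ in $\overline{A}_b$ relates to the assumed limit $\overline{u_bA_b}$.
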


$$\bar A_{b,T}=\frac1T\sum_{t=1}^T \mathbb E[A_b(t)],\quad
\bar U_{b,T}=\frac1T\sum_{t=1}^T \mathbb E[u_b(t)A_b(t)].
$$

\begin{IEEEproof} 
   \rev{Applying the law of iterated expectations to the one-step conditional drift established in Lemma \ref{lemma:one_step_drift} yields}
    \begin{equation}
        \mathbb{E}[A_{b}(t+1)]=\rho_{b}\mathbb{E}[A_{b}(t)]-\rho_{b}\mathbb{E}[u_{b}(t)A_{b}(t)]+\rho_{b}\mathbb{E}[N_{b}(t)]+\mu_{b}.
    \end{equation}
    Subtracting $\mathbb{E}[A_b(t)]$ from both sides and average over the time horizon from $t=1$ to $T$, we obtain
    \begin{equation}
    \begin{aligned}
        \frac{1}{T}\big(\mathbb{E}&[A_{b}(T+1)]-\mathbb{E}[A_{b}(1)]\big)=(\rho_{b}-1)\frac{1}{T}\sum_{t=1}^{T}\mathbb{E}[A_{b}(t)]\\
        &-\rho_{b}\frac{1}{T}\sum_{t=1}^{T}\mathbb{E}[u_{b}(t)A_{b}(t)]+\rho_{b}\frac{1}{T}\sum_{t=1}^{T}\mathbb{E}[N_{b}(t)]+\mu_{b}.
    \end{aligned}
    \end{equation}
    Taking the limit superior as $T \to \infty$, Lemma \ref{lemma:stability} guarantees that the boundary difference on the left-hand side vanishes. Concurrently, applying the definition of the long-term time average to the right-hand side yields
    \begin{equation}
        0 = (\rho_{b}-1)\overline{A}_{b} - \rho_{b}\overline{u_{b}A_{b}} + \rho_b \lambda_b + \mu_b.
    \end{equation}
    Substituting $\mu_{b} = (1-\rho_{b})\lambda_{b}$ simplifies the equation to
    \begin{equation}
        0 = (\rho_{b}-1)\overline{A}_{b} - \rho_{b}\overline{u_{b}A_{b}} + \lambda_{b}.
    \end{equation}
    Rearranging these terms to solve for $\overline{A}_{b}$ yields \eqref{eq:prop2}, which completes the proof.
\end{IEEEproof}

\textbf{Insights:}
\textit{The expression for $\overline{A}_{b}$ in \eqref{eq:prop2} contains two components: \ding{182} A policy-independent baseline ($\frac{\lambda_{b}}{1-\rho_{b}}$), which represents the accumulation of information staleness in area $b$ only depending on exogenous factors, i.e., $\lambda_b$ and $\rho_b$. This term acts as the upper bound on the time-average AoI when there are no updates from the BS. \ding{183} A policy-dependent term ($\frac{\rho_{b}}{1-\rho_{b}}\overline{u_{b}A_{b}}$), which captures the AoI reduction achieved via BS updates and is weighted by the coefficient $\frac{\rho_{b}}{1-\rho_{b}}$ related to vehicular mobility/occlusion factors.
}

\subsection{Problem Transformation}
Minimizing the time-average sum AoI $\sum_{b=1}^{B}\overline{A}_{b}$ in \eqref{eq:obj} is mathematically equivalent to maximizing the sum of the weighted AoI reductions. According to Proposition \ref{prop:characterization}, the optimization problem can thus be reformulated as follows
\begin{subequations}\label{p:reformulated_min_AoI}
\begin{align}
    \max_{\pi}~~ & \sum_{b=1}^{B}\frac{\rho_{b}}{1-\rho_{b}}\overline{u_{b}A_{b}} \label{eq:re_obj} \\
    \text{s.t.} ~~ & \eqref{eq:c1}, \eqref{eq:c2}.
\end{align}
\end{subequations}
However, solving the reformulated problem in \eqref{p:reformulated_min_AoI} remains PSPACE-hard. In our optimization, this difficulty is further exacerbated by two main factors. First, due to the $d$-slot observation delay, the current system state, $\{N_{b}(t), A_{b}(t)\}_{b\in\mathcal{B}}$, is \emph{hidden} from the BS at the moment the scheduling decision $u_b(t)$ must be made. Second, the underlying system dynamics,  $\boldsymbol{\theta}_b = [\rho_b, \mu_b]^\top$, are \emph{unknown} a priori and must be learned online. 
To address these coupled challenges, we propose an efficient online LocMW policy in Section \ref{sec:online_learning}. To establish performance benchmarks, we first derive a fundamental lower bound and a stationary randomized scheduling policy in Sections \ref{sec:lower_bound} and \ref{sec:randomized_policy}.

\section{Lower Bound on Time-average Sum AoI}
\label{sec:lower_bound}

In this section, we derive a mean-field lower bound on the time-average sum AoI in \eqref{p:min_AoI} for any admissible scheduling policy.

We define $p_b \triangleq \lim_{T\to\infty}\frac{1}{T}\sum_{t=1}^T\mathbb{E}[u_b(t)]$ as the expected update rate for area $b$, which satisfies $\sum_{b=1}^{B} p_{b}\le K$. For each area $b$, let $\tilde{N}_b(t) \triangleq \mathbb{E}[N_b(t) | \mathcal{H}(t)] = \lambda_b + \rho_b^d(N_b(t-d) - \lambda_b)$ denote the $d$-slot demand predictor. The corresponding prediction error is defined as $e_b(t) \triangleq N_b(t) - \tilde{N}_b(t)$, which satisfies $\mathbb{E}[e_b(t) | \mathcal{H}(t)] = 0$. 
Let $Z_{b}(t) \triangleq \mathbb{E}[A_{b}(t) | \mathcal{H}(t)]$ denote the conditional expected aggregate AoI.
By the law of iterated expectations, we have
\begin{equation}
    \begin{aligned}
        \mathbb{E}[\tilde N_b(t)] &= \mathbb{E}[\mathbb{E}[N_b(t) | \mathcal H(t)]] =\mathbb{E}[N_b(t)], \\
        \mathbb{E}[Z_b(t)] &=\mathbb{E}[\mathbb{E}[A_b(t)|\mathcal H(t)]]=\mathbb{E}[A_b(t)].
    \end{aligned}
\end{equation}
We then define the deterministic long-term means $z_b \triangleq \lim_{T\to\infty}\frac{1}{T}\sum_{t=1}^T\mathbb{E}[Z_b(t)]$ and
$x_b \triangleq \lim_{T\to\infty}\frac{1}{T}\sum_{t=1}^T\mathbb{E}[u_b(t)Z_b(t)]$. 
In addition, let $\tilde{M}_{b}(t+1) \triangleq \rho_{b}\tilde{N}_{b}(t)+\mu_{b}$ represent the predicted one-step input to the AoI dynamics conditioned on $\mathcal{H}(t)$.
Since $\mu_b=(1-\rho_b)\lambda_b$, its asymptotic expectation is given by $\lim_{t\to\infty}\mathbb{E}[\tilde{M}_{b}(t+1)]=\rho_{b}\lambda_{b}+\mu_{b}=\lambda_{b}$.
We introduce the following assumption regarding the asymptotic behavior of the predicted dynamics.

{\begin{assumption}[Mean-field Concentration]
\label{ass:mf_decoupling} 
For each area \(b\), the predicted one-step input satisfies
\begin{equation}
\begin{aligned}
    \lim_{T\to\infty}\frac{1}{T}\sum_{t=1}^{T}
\operatorname{Var}[\tilde M_b(t+1)]=0. 
\end{aligned}
\end{equation}
\end{assumption}
Assumption~\ref{ass:mf_decoupling} imposes a mean-field concentration on the predicted input, which is assumed to converge to its mean. This is a standard assumption in stochastic network analysis, which becomes true 
as the number of entities grows large~\cite{srikant2014communication, benaim2008class}. In our system, it becomes asymptotically accurate for large unobserving populations (e.g., at dense urban intersections). Under this premise, we establish the following theorem.
}

\begin{theorem}\label{thm:lower_bound}
Under Assumption~\ref{ass:mf_decoupling}, the time-average sum AoI achieved by any admissible scheduling policy satisfies
\begin{equation}
    \sum_{b=1}^{B}\bar A_b
    \ge C^{\mathrm{LB}} \triangleq \min_{\substack{0\le p_b\le 1\\ \sum_{b=1}^{B}p_b\le K}} \sum_{b=1}^{B} L_b(p_b), \label{eq:mf_lb}
\end{equation}
where $L_b(p_b) \triangleq \lambda_b \frac{p_b+1}{p_b(1+\rho_b)+1-\rho_b}$.
Moreover, the optimal solution of \eqref{eq:mf_lb} has the water-filling form
\begin{equation}
    p_b^*(\gamma^*)=\left[\frac{\sqrt{2\lambda_b\rho_b/\gamma^*}-(1-\rho_b)}{1+\rho_b}\right]_0^1,\label{eq:lb_water_filling}
\end{equation}
where $[x]_0^1 \triangleq \min\{1, \max\{0,x\}\}$, and $\gamma^* > 0$ is a Lagrange multiplier chosen such that $\sum_{b=1}^B p_b^*(\gamma^*) = K$. The optimal value of $\gamma^*$ can be found using the bisection method.
\end{theorem}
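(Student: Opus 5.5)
The plan is to derive, for each area $b$, a relation among the three long-run quantities $p_b$, $x_b$ and $z_b=\bar A_b$ that holds for any admissible policy. Combining it with the exact first-moment identity of Proposition~\ref{prop:characterization} should give $\bar A_b\ge L_b(p_b)$. Summing over $b$ and minimizing over feasible rates then yields \eqref{eq:mf_lb}, and KKT gives \eqref{eq:lb_water_filling}.

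\emph{Step 1 (reduction to predicted quantities).} Since $u_b(t)$ is $\mathcal H(t)$-measurable, $\mathbb E[u_b(t)A_b(t)]=\mathbb E[u_b(t)Z_b(t)]$. Hence $\overline{u_bA_b}=x_b$ and $\bar A_b=z_b$, and Proposition~\ref{prop:characterization} reads $(1-\rho_b)z_b=\lambda_b-\rho_b x_b$. Conditioning Lemma~\ref{lemma:one_step_drift} on $\mathcal H(t)$ gives the predicted recursion
\begin{equation*}
\mathbb E[Z_b(t+1)\mid\mathcal H(t)]=\rho_b(1-u_b(t))Z_b(t)+\tilde M_b(t+1).
\end{equation*}
This relation is exact up to the information gained between $\mathcal H(t)$ and $\mathcal H(t+1)$, which is a martingale-difference term with zero conditional mean.

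\emph{Step 2 (second-moment identity).} Square the recursion, take expectations and time-average, using boundedness of $N_b$ and $\omega_b$ (so $Z_b$ has bounded growth and the telescoping boundary terms vanish, as in Lemma~\ref{lemma:stability}). Assumption~\ref{ass:mf_decoupling} lets me replace $\tilde M_b$ by its mean $\lambda_b$ in the cross term, since $\mathrm{Cov}(Z_b,\tilde M_b)$ averages to zero by Cauchy--Schwarz, and it also gives $\overline{\tilde M_b^2}=\lambda_b^2$. The martingale-difference term enters only as a nonnegative variance, which can be dropped in the direction we need. With $S_b=\overline{Z_b^2}$ and $S_b^u=\overline{u_bZ_b^2}$, this should give
\begin{equation*}
(1-\rho_b^2)S_b+\rho_b^2S_b^u\ \ge\ 2\rho_b\lambda_b(z_b-x_b)+\lambda_b^2 .
\end{equation*}
Cauchy--Schwarz then supplies $S_b^u\ge x_b^2/p_b$ and $S_b\ge z_b^2$. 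I expect that, after eliminating $x_b$ via Step 1, these combine to the mean-field inequality $x_b(1+p_b)\le 2p_bz_b$.

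\emph{Step 3 (closing the bound).} As a consistency check, substituting $x_b=2p_bz_b/(1+p_b)$ into $(1-\rho_b)z_b=\lambda_b-\rho_bx_b$ gives exactly $z_b=\lambda_b(p_b+1)/(p_b(1+\rho_b)+1-\rho_b)=L_b(p_b)$. Monotonicity in $x_b$ then turns the inequality into $\bar A_b\ge L_b(p_b)$. Since the rates of any policy satisfy $0\le p_b\le1$ and $\sum_b p_b\le K$, summing gives \eqref{eq:mf_lb}.

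\emph{Step 4 (water-filling).} Compute $L_b'(p)=-2\rho_b\lambda_b/D_b(p)^2$ with $D_b(p)=p(1+\rho_b)+1-\rho_b$. Thus $L_b$ is strictly decreasing and convex, so the budget is tight and KKT is necessary and sufficient. Stationarity $2\rho_b\lambda_b/D_b^2=\gamma$ gives $D_b=\sqrt{2\lambda_b\rho_b/\gamma}$, which is \eqref{eq:lb_water_filling} after clipping to $[0,1]$. Each $p_b^*(\gamma)$ is continuous and nonincreasing in $\gamma$, so $\sum_b p_b^*(\gamma)=K$ has a root that bisection locates.

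\emph{Main obstacle.} The main obstacle is Step 2: getting exactly the factor $2p_b/(1+p_b)$. In particular, I must show the Cauchy--Schwarz/Jensen slack and the variance terms have the sign that yields a lower bound on $z_b$, and that Assumption~\ref{ass:mf_decoupling} removes every remaining cross term. If the squared-drift route does not close cleanly, my fallback is to work directly with the deterministic mean-field recursion $z(t+1)=\rho_b(1-u)z(t)+\lambda_b$. I would minimize its time average over $0/1$ schedules with update frequency $p_b$, and argue by convexity that alternating/idle patterns achieve $L_b(p_b)$ as the best possible value.
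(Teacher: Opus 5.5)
\textbf{Gap in Step~2: the second-moment inequality points the wrong way.} Your plan follows the paper's route: the first-moment identity, the squared predicted recursion with Assumption~\ref{ass:mf_decoupling} removing the cross-term covariance, Jensen and Cauchy--Schwarz, a quadratic in $z_b$, then KKT. Steps~1, 3 and~4 are fine, but Step~2 as you set it up cannot close. Write $\xi_b(t+1)\triangleq Z_b(t+1)-\mathbb{E}[Z_b(t+1)\mid\mathcal H(t)]$ for the innovation. The exact time-averaged second-moment identity is then
\begin{equation*}
(1-\rho_b^2)S_b+\rho_b^2S_b^u=2\lambda_b z_b-\lambda_b^2+\overline{\xi_b^2},
\end{equation*}
and dropping $\overline{\xi_b^2}\ge 0$ gives your ``$\ge$''. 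Jensen ($S_b\ge z_b^2$) and Cauchy--Schwarz ($S_b^u\ge x_b^2/p_b$) are also lower bounds on $S_b$ and $S_b^u$, so they cannot be chained with it. The inequality you expect, $x_b(1+p_b)\le 2p_bz_b$, therefore does not follow.

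What you need is the upper bound $(1-\rho_b^2)S_b+\rho_b^2S_b^u\le 2\lambda_b z_b-\lambda_b^2$. With it, Jensen and Cauchy--Schwarz give $(1-\rho_b^2)z_b^2+\bigl(\lambda_b-(1-\rho_b)z_b\bigr)^2/p_b\le 2\lambda_b z_b-\lambda_b^2$. This is a quadratic in $z_b$ whose discriminant collapses to $\lambda_b^2\rho_b^2p_b^2$ and whose roots are $L_b(p_b)$ and $\lambda_b/(1-\rho_b)$. Hence $z_b\ge L_b(p_b)$, which is equivalent to your $x_b(1+p_b)\le 2p_bz_b$. The innovation variance sits on exactly the harmful side, so it cannot be dropped; it has to vanish for the argument to produce $L_b(p_b)$.

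\textbf{How the paper gets the equality, and what you must add.} The paper squares the predicted recursion as though $Z_b(t+1)=\mathbb{E}[A_b(t+1)\mid\mathcal H(t)]$. In effect it sets the innovation to zero as part of its mean-field treatment, which yields an equality rather than your inequality. To finish, you must make this reduction explicit, either as part of the mean-field hypothesis or through a strengthened concentration condition that gives $\overline{\xi_b^2}=0$. Assumption~\ref{ass:mf_decoupling} as stated controls only $\tilde M_b$. Without this step you get only the weaker bound with $2\lambda_bz_b-\lambda_b^2+\overline{\xi_b^2}$ on the right.

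\textbf{Your fallback needs adjusting.} For the deterministic recursion $z(t+1)=\rho_b(1-u)z(t)+\lambda_b$, the squared-identity argument is rigorous because there is no innovation. However, $L_b(p_b)$ is not attained by any $0/1$ schedule with $0<p_b<1$, since Jensen is tight only for constant $Z_b$. For example, alternating updates give $\lambda_b(2+\rho_b)/2>3\lambda_b/(3-\rho_b)=L_b(1/2)$. So aim to prove only the inequality, not that some schedule achieves the bound.
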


\begin{IEEEproof}
    The proof can be found in Appendix \ref{proof:lower_bound}.
\end{IEEEproof}

Note that \(C^{\mathrm{LB}}\) is a structural lower bound based on a mean-field assumption with fractional update rates and known statistical parameters, serving as an analytical baseline for performance evaluation.

\section{Optimal Randomized Scheduling Policy}
\label{sec:randomized_policy}
This section develops an optimal randomized scheduling policy for Problem~(\ref{p:min_AoI}). 
Unlike state-aware policies, this policy depends only on the statistical parameters $\{\lambda_b,\rho_b\}_{b\in\mathcal{B}}$ and can be implemented without feedback. This randomized policy serves as a strong benchmark for our subsequent development.


Let \(\Pi_{\mathrm{R}}\) denote the class of stationary randomized policies. 
Each policy \(\pi^{\mathrm{R}}\in\Pi_{\mathrm{R}}\) is specified by a vector of marginal broadcasting probabilities $\boldsymbol{\eta}^{\mathrm{R}} =[\eta_1^{\mathrm{R}},\ldots,\eta_B^{\mathrm{R}}]^\top$, where $\eta_b^{\mathrm{R}} \triangleq \Pr\{u_b^{\mathrm{R}}(t)=1\} \in [0,1]$ with $\sum_{b=1}^B \eta_b^{\mathrm{R}} \le K$.
At each slot, the BS samples a feasible subset \(\mathcal{S}^{\mathrm{R}}(t)\subseteq \mathcal{B}\) satisfying
\(|\mathcal{S}^{\mathrm{R}}(t)|\le K\), and sets \(u_b^{\mathrm{R}}(t)=1\) if and only if \(b\in \mathcal{S}^{\mathrm{R}}(t)\). This sampling process is independent across time and the system history.

\begin{theorem} 
\label{thm:optimal_randomized}
The optimal randomized marginal probability is
\begin{equation}
    \eta_b^{\mathrm R,*}(\nu^*) =\left[\frac{\sqrt{\lambda_b\rho_b/\nu^*}-(1-\rho_b)}{\rho_b}\right]_0^1,
\label{eq:optimal_randomized_prob}
\end{equation}
where the dual variable \(\nu^*>0\) is determined via one-dimensional bisection to satisfy $\sum_{b\in\mathcal B}\eta_b^{\mathrm{R},*}(\nu^*)=K$. Consequently, the minimum expected time-average sum AoI achieved by this policy is
\begin{equation}
    C^{\mathrm{R},*}=\sum_{b=1}^B\frac{\lambda_b}{1-\rho_b+\rho_b\eta_b^{\mathrm{R},*}}.
\label{eq:optimal_randomized_value}
\end{equation}
\end{theorem}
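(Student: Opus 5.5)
The plan is to compute, for any fixed marginal vector $\boldsymbol{\eta}^{\mathrm R}$, the exact time-average AoI $\overline{A}_b$ under $\pi^{\mathrm R}$, and then solve the resulting separable convex program over the simplex-like constraint set by KKT conditions.

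\emph{Step 1: per-area AoI under randomization.} Under $\pi^{\mathrm R}$, $u_b^{\mathrm R}(t)$ is drawn independently of the system history $\mathcal{F}_t$, so $\mathbb{E}[u_b(t)A_b(t)]=\eta_b^{\mathrm R}\mathbb{E}[A_b(t)]$. Substituting this into Lemma~\ref{lemma:one_step_drift} and taking expectations gives the linear recursion
\begin{equation*}
\mathbb{E}[A_b(t+1)]=\rho_b(1-\eta_b^{\mathrm R})\mathbb{E}[A_b(t)]+\rho_b\mathbb{E}[N_b(t)]+\mu_b .
\end{equation*}
Since $\rho_b(1-\eta_b^{\mathrm R})<1$ and $\mathbb{E}[N_b(t)]\to\lambda_b$, the recursion converges. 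Its limit (equivalently its Ces\`aro mean) is $\lambda_b/(1-\rho_b(1-\eta_b^{\mathrm R}))$, using $\rho_b\lambda_b+\mu_b=\lambda_b$.

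One could instead plug $\overline{u_bA_b}=\eta_b^{\mathrm R}\overline{A}_b$ into Proposition~\ref{prop:characterization}. That route also needs the limit to exist, which the contraction argument supplies. Either way,
\begin{equation*}
\overline{A}_b=\frac{\lambda_b}{1-\rho_b+\rho_b\eta_b^{\mathrm R}} .
\end{equation*}

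Two further remarks on this step. First, the value depends on the policy only through the marginals, so any feasible sampling scheme with these marginals will do. Realizability of a marginal vector with $\sum_b\eta_b\le K$ and $\eta_b\in[0,1]$ by subsets of size at most $K$ is standard, e.g., by systematic/dependent rounding; I would cite or briefly note it. Second, the correlation in the thinning does not matter, because only first moments enter.

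\emph{Step 2: optimization.} We then minimize $f(\boldsymbol\eta)=\sum_b\lambda_b/(1-\rho_b+\rho_b\eta_b)$ over $0\le\eta_b\le1$ and $\sum_b\eta_b\le K$. Each term is strictly convex and decreasing in $\eta_b$, so the problem is convex and the KKT conditions are sufficient. Because the objective is decreasing, the budget constraint is tight whenever $K<B$; if $K\ge B$, the solution is trivially $\eta_b=1$.

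Introduce a multiplier $\nu$ for the budget constraint. Stationarity of the unclipped term gives
\begin{equation*}
\frac{\lambda_b\rho_b}{(1-\rho_b+\rho_b\eta_b)^2}=\nu,
\end{equation*}
so $\eta_b=\big(\sqrt{\lambda_b\rho_b/\nu}-(1-\rho_b)\big)/\rho_b$. The box multipliers produce the projection $[\cdot]_0^1$, which yields \eqref{eq:optimal_randomized_prob}. Each $\eta_b^{\mathrm R,*}(\nu)$ is continuous and nonincreasing in $\nu$, running from $1$ as $\nu\to0$ to $0$ for large $\nu$. Hence a $\nu^*>0$ with $\sum_b\eta_b^{\mathrm R,*}(\nu^*)=K$ exists, and it can be found by bisection. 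Substituting back gives \eqref{eq:optimal_randomized_value}.

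\emph{Main obstacle.} The only delicate point is Step 1's claim that $\overline{A}_b$ is a genuine limit equal to the fixed point, rather than merely a limsup. I would handle it by bounding $|\mathbb{E}[A_b(t)]-\lambda_b/(1-\rho_b(1-\eta_b))|$ with a geometric term plus a convolution of $|\mathbb{E}[N_b(t)]-\lambda_b|\to0$. The remainder is routine convex analysis.
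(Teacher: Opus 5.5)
Your proposal is correct and follows essentially the same route as the paper. Independence of $u_b^{\mathrm R}(t)$ from $A_b(t)$ turns Lemma~\ref{lemma:one_step_drift} into a linear recursion with long-run value $\lambda_b/(1-\rho_b+\rho_b\eta_b^{\mathrm R})$, and then KKT on the separable convex program, clipping to $[0,1]$, and bisection on $\nu$ give the result. The one difference is that you pass to the limit by a contraction argument, whereas the paper uses Ces\`aro averaging together with Lemma~\ref{lemma:stability}. The paper's route already yields a genuine limit: the averaged identity expresses $\frac{1}{T}\sum_{t=1}^{T}\mathbb{E}[A_b(t)]$ exactly in terms of $\frac{1}{T}\sum_{t=1}^{T}\mathbb{E}[N_b(t)]\to\lambda_b$ plus an $O(1/T)$ boundary term. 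So the limsup-versus-limit issue you flag resolves itself on either route.
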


\begin{IEEEproof}
Taking the expectation of the one-step AoI evolution in \eqref{eq:one_step_aoi}, and noting that $\mathbb{E}[u_b^{\mathrm{R}}(t)A_b(t)] = \eta_b^{\mathrm{R}} \mathbb{E}[A_b(t)]$ as $u_b^{\mathrm{R}}(t)$ is independent of the current AoI state, yields
\begin{equation}
    \mathbb E[A_b(t+1)]=\rho_b(1-\eta_b^{\mathrm{R}})\mathbb E[A_b(t)]+\rho_b\mathbb E[N_b(t)]+\mu_b .
\end{equation}
Averaging over \(t=1,\ldots,T\), letting \(T\to\infty\), and invoking the mean-rate stability established in Lemma \ref{lemma:stability}, we obtain
\begin{equation}
    0=-\bigl(1-\rho_b+\rho_b\eta_b^{\mathrm{R}}\bigr)\bar A_b^{\mathrm{R}}+\rho_b\lambda_b+\mu_b.
\end{equation}
Substituting $\mu_b = (1-\rho_b)\lambda_b$ and solving for \(\bar A_b^{\mathrm{R}}\) gives 
\begin{equation}
    \bar{A}_b^{\mathrm{R}}(\eta_b^{\mathrm{R}}) = \frac{\lambda_b}{1-\rho_b+\rho_b\eta_b^{\mathrm{R}}} \triangleq F_b(\eta_b^{\mathrm{R}}).
\end{equation}
The optimal randomized policy is thus obtained by minimizing the sum of these individual functions:
\begin{equation}
C^{\mathrm{R},*} \triangleq \min_{\boldsymbol{\eta}\in\mathcal{P}_K} \sum_{b=1}^B F_b(\eta_b),
\label{eq:randomized_optimization}
\end{equation}
where $\mathcal{P}_K \triangleq \left\{\boldsymbol{\eta}\in[0,1]^B : \sum_{b=1}^B \eta_b \le K\right\}$ is the feasible set. 
It is straightforward to verify that $F_b(\eta_b^{\mathrm{R}})$ is convex and monotonically decreasing with respect to $\eta_b^{\mathrm{R}}$. 
Let $\nu \ge 0$ be the Lagrange multiplier associated with the bandwidth constraint. Solving the KKT stationarity conditions yields
\begin{equation}
    \eta_b = \frac{\sqrt{\lambda_b\rho_b/\nu} - (1-\rho_b)}{\rho_b}.
\end{equation}
Projecting this onto $[0,1]$ results in the formulation in \eqref{eq:optimal_randomized_prob}. The optimal $\nu^*$ is efficiently found via bisection. Finally, substituting $\eta_b^{\mathrm{R},*}$ back into the objective provides the performance bound in 
\eqref{eq:optimal_randomized_value}, completing the proof.
\end{IEEEproof}

\begin{theorem}\label{thm:randomized_guarantee}
The optimal stationary randomized policy satisfies
\begin{equation}
    C^{\mathrm{LB}} \le C^{\mathrm{R},*} \le 2C^{\mathrm{LB}}.
    \label{eq:randomized_factor_two}
\end{equation}
\end{theorem}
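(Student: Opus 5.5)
The plan is to prove both inequalities by comparing the per-area functions $F_b(\eta)=\lambda_b/(1-\rho_b+\rho_b\eta)$ from the proof of Theorem~\ref{thm:optimal_randomized} and $L_b(p)=\lambda_b(p+1)/(p(1+\rho_b)+1-\rho_b)$ from Theorem~\ref{thm:lower_bound}. Both are minimized over the \emph{same} feasible set $\mathcal{P}_K=\{\boldsymbol{\eta}\in[0,1]^B:\sum_b\eta_b\le K\}$. It therefore suffices to establish the pointwise sandwich
\begin{equation*}
L_b(p)\le F_b(p)\le 2L_b(p),\qquad \forall p\in[0,1],\ \forall b\in\mathcal{B},
\end{equation*}
and then pass to the minima. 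This route avoids invoking Assumption~\ref{ass:mf_decoupling} a second time; one could instead note that the randomized policy is admissible and apply Theorem~\ref{thm:lower_bound} directly.

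\textbf{Left inequality.} Both denominators are positive, so $L_b(p)\le F_b(p)$ is equivalent to
\begin{equation*}
(p+1)(1-\rho_b+\rho_b p)\le p(1+\rho_b)+1-\rho_b .
\end{equation*}
Expanding, the left side equals $p+\rho_b p^2+1-\rho_b$, and the right side equals $p+\rho_b p+1-\rho_b$. The inequality thus reduces to $\rho_b p^2\le\rho_b p$, which holds for $p\in[0,1]$. Summing over $b$ gives $\sum_b L_b(\eta_b)\le\sum_b F_b(\eta_b)$ for every $\boldsymbol{\eta}\in\mathcal{P}_K$. Evaluating at the minimizer $\boldsymbol{\eta}^{\mathrm R,*}$ of Theorem~\ref{thm:optimal_randomized} yields
\begin{equation*}
C^{\mathrm{LB}}\le\sum_b L_b(\eta_b^{\mathrm R,*})\le\sum_b F_b(\eta_b^{\mathrm R,*})=C^{\mathrm R,*}.
\end{equation*}

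\textbf{Right inequality.} Similarly, $F_b(p)\le 2L_b(p)$ is equivalent to
\begin{equation*}
p(1+\rho_b)+1-\rho_b\le 2\big(p+\rho_b p^2+1-\rho_b\big),
\end{equation*}
that is, $0\le p(1-\rho_b)+2\rho_b p^2+(1-\rho_b)$. This holds trivially since $\rho_b\in(0,1)$ and $p\ge 0$. Take the water-filling minimizer $\boldsymbol{p}^*=\boldsymbol{p}^*(\gamma^*)$ of \eqref{eq:mf_lb}, which lies in $\mathcal{P}_K$ and is therefore a feasible marginal vector for some stationary randomized policy. Then
\begin{equation*}
C^{\mathrm R,*}\le\sum_b F_b(p_b^*)\le 2\sum_b L_b(p_b^*)=2C^{\mathrm{LB}}.
\end{equation*}

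The only step needing care is the feasibility claim: any $\boldsymbol{\eta}\in\mathcal{P}_K$ must be realizable as the marginals of a random subset of size at most $K$, so that $\sum_b F_b(\eta_b)$ is genuinely achievable by a policy in $\Pi_{\mathrm R}$. I would justify this with a standard systematic-sampling or dependent-rounding construction: lay the intervals of lengths $\eta_b$ consecutively on $[0,K]$, draw one uniform offset, and select the (at most $K$) areas hit by the offset plus integers. This gives exact marginals $\eta_b$ with $|\mathcal{S}^{\mathrm R}(t)|\le K$, consistent with how Theorem~\ref{thm:optimal_randomized} already treats $\mathcal{P}_K$ as its feasible set. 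Beyond this, the argument is purely algebraic.
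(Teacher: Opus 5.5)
Your proof is correct. The upper bound $C^{\mathrm R,*}\le 2C^{\mathrm{LB}}$ is handled exactly as in the paper. The paper also reduces $F_b(x)\le 2L_b(x)$ to the nonnegativity of $(1-\rho_b)(1+x)+2\rho_b x^2$, then evaluates $\sum_b F_b$ at the lower-bound minimizer, using that this minimizer lies in $\mathcal{P}_K$.

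The difference is in the left inequality. The paper's proof never argues it explicitly. It is implicitly inherited from Theorem~\ref{thm:lower_bound}, applied to the randomized policy as one admissible policy. That route passes through the mean-field Assumption~\ref{ass:mf_decoupling} and the second-moment analysis of $Z_b(t)$. Your pointwise bound $L_b(p)\le F_b(p)$ on $[0,1]$, which collapses to $\rho_b p^2\le\rho_b p$, shows instead that the whole sandwich is a comparison of two separable programs over the same set $\mathcal{P}_K$. It therefore holds with no assumption on the demand process, which makes Theorem~\ref{thm:randomized_guarantee} self-contained. This matters because Assumption~\ref{ass:mf_decoupling} is restrictive: here $\operatorname{Var}[\tilde M_b(t+1)]=\rho_b^{2(d+1)}\operatorname{Var}[N_b(t-d)]$, so it essentially asks the demand to become deterministic.

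What the paper's route buys in exchange is the stronger operational fact that $C^{\mathrm{LB}}$ lower-bounds every admissible policy, not just stationary randomized ones. That is not needed for this theorem, but it is what later places $C^{\mathrm{LB}}$ below LocMW. Your systematic-sampling remark also supplies the realizability of an arbitrary $\boldsymbol{\eta}\in\mathcal{P}_K$ by random subsets of size at most $K$, which the paper only asserts.
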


\begin{IEEEproof}
Recall that the per-area term in the mean-field lower bound is defined as $L_b(x)\triangleq\lambda_b\frac{x+1}{x(1+\rho_b)+1-\rho_b}$. 
For any $x \in [0,1]$, evaluating the ratio between $F_b(x)$ and $L_b(x)$ yields
\begin{equation}
    \frac{F_b(x)}{L_b(x)} = \frac{1-\rho_b+(1+\rho_b)x}{(1-\rho_b+\rho_b x)(1+x)}.
\end{equation}
We have
\begin{equation}
    \begin{aligned}
        &2(1+x)(1-\rho_b+\rho_b x) - \big(1-\rho_b+(1+\rho_b)x\big) \\
    = &(1-\rho_b)(1+x) + 2\rho_b x^2 >0,
    \end{aligned}
\end{equation}
which implies $F_b(x) \le 2L_b(x)$ for all $b \in \mathcal{B}$ and $x \in [0,1]$.
Let $\mathbf{p}^{\mathrm{LB}} = [p_1^{\mathrm{LB}}, \ldots, p_B^{\mathrm{LB}}]^\top$ denote the optimal allocation vector for the lower-bound. As $\mathbf{p}^{\mathrm{LB}} \in \mathcal{P}_K$ is also a feasible marginal probability vector for the randomized policy, the optimal performance is bounded by
\begin{equation}
    C^{\mathrm{R},*} \le \sum_{b=1}^B F_b(p_b^{\mathrm{LB}}) \le 2\sum_{b=1}^B L_b(p_b^{\mathrm{LB}}) = 2C^{\mathrm{LB}},
\end{equation}
completing the proof.
\end{IEEEproof}

 Theorem~\ref{thm:randomized_guarantee} establishes that the optimal randomized policy achieves a 2-approximation of the mean-field lower bound. Since this randomized approach does not condition on state history, its performance $C^{\mathrm{R},*}$ is independent of the observation delay $d$.

\section{Local-sensing-aware Max-Weight Policy}
\label{sec:online_learning}

In this section, we develop an efficient online learning approach, termed Local-sensing-aware Max-Weight (LocMW) scheduling, to solve the PO-RMAB problem formulated in Sections~\ref{sec:system_model} and \ref{sec:aoi_def}. We first detail the proposed LocMW policy and then derive theoretical performance guarantees.

\subsection{The LocMW Framework}
The proposed LocMW policy employs a three-step learning-and-control framework. Specifically, the BS first estimates the system parameters $\boldsymbol{\theta}_b = [\rho_b, \mu_b]^\top$ from delayed reports, then predicts the current network states, and finally schedules updates via a Max-Weight scheduling approach.

\textbf{\textit{Step 1. Online Parameter Estimation:}}
At the beginning of slot $t$, the BS observes the delayed historical states \(\mathcal H(t)=\{N_b(\tau),A_b(\tau): b\in\mathcal B,\tau\le t-d\}\). For each area \(b\), define the state regressor $\mathbf{x}_b(\tau) = [N_b(\tau-1), 1]^\top$ and the target observation $y_b(\tau) = N_b(\tau)$.
Under the INAR(1) demand model, the expected demand evolves as $\mathbb E[y_b(\tau)\mid \mathbf{x}_b(\tau)]= \mathbf{x}_b(\tau)^\top\boldsymbol{\theta}_b$.
Thus, the BS estimates \(\boldsymbol{\theta}_b\) via an $\ell_2$-regularized least squares (ridge regression) estimator \cite{hastie2009elements, zhang2010regularized}:
\begin{equation}
    \tilde{\boldsymbol{\theta}}_b(t) = \arg\min_{\boldsymbol{\theta} \in \mathbb{R}^2} \left( \sum_{\tau=2}^{t-d} \left( y_b(\tau) - \mathbf{x}_b(\tau)^\top \boldsymbol{\theta} \right)^2 + \zeta\|\boldsymbol{\theta}\|_2^2 \right). \label{eq:theta_est}
\end{equation} 
where $\zeta >0$ is the regularization parameter. 
The $\ell_2$-penalty prevents the parameter estimates from overfitting to stochastic traffic anomalies.

To enhance computational efficiency, this estimator is computed recursively rather than from scratch at every slot. By maintaining $\mathbf{G}_b(t)=\zeta\mathbf{I}+\sum_{\tau=2}^{t-d} \mathbf{x}_b(\tau)\mathbf{x}_b(\tau)^\top$ and $\mathbf{h}_b(t) = \sum_{\tau=2}^{t-d}\mathbf{x}_b(\tau)y_b(\tau)$, the estimate simplifies to 
\begin{equation}
    \tilde{\boldsymbol{\theta}}_b(t) = \mathbf{G}_b(t)^{-1}\mathbf{h}_b(t). \label{eq:theta_Gh}
    \end{equation}
Since $\mathbf{G}_b(t) \in \mathbb{R}^{2\times2}$, its inversion incurs negligible $\mathcal{O}(1)$ complexity. In slot $t+1$, $\mathbf{G}_b$ and $\mathbf{h}_b$ are incrementally updated with the newly available observation $(\mathbf{x}_b(t+1-d), y_b(t+1-d))$.

Finally, to ensure system stability, this estimate $\tilde{\boldsymbol{\theta}}_b(t)$ is projected onto the parameter bounds to yield the final result
\begin{equation} \label{eq:theta_proj}
   \hat{\boldsymbol{\theta}}_b(t) = [\hat{\rho}_b(t), \hat{\mu}_b(t) ]^\top = \operatorname{Proj}_{\Theta} \big( \tilde{\boldsymbol{\theta}}_b(t) \big),
\end{equation} 
where $\operatorname{Proj}_{\Theta}(\cdot)$ denotes the Euclidean projection operator onto the compact set $\Theta \triangleq [0, \rho_{\max}] \times [0, \mu_{\max}]$ with $\rho_{\max} < 1$.

\textbf{\textit{Step 2. Certainty-Equivalent State Prediction:}}
To mitigate the intractability of exact belief-state tracking over the unobservable delay window, we adopt the certainty equivalence principle. Specifically, the BS predicts the current system state. Starting from the most recently received delayed reports, $\hat{N}_b(t-d) = N_b(t-d)$ and $\hat{A}_b(t-d) = A_b(t-d)$, the BS recursively propagates the estimated dynamics over the delay window $\tau = t-d, \dots, t-1$, according to the following rules:
\begin{equation}
\begin{aligned} \label{eq:state_pred}
    &\hat{N}_b(\tau+1) = \hat{\rho}_b(t) \hat{N}_b(\tau) + \hat{\mu}_b(t), \\
    &\hat{A}_b(\tau+1) = \hat{\rho}_b(t) \big(1 - u_b(\tau)\big) \hat{A}_b(\tau) + \hat{\rho}_b(t) \hat{N}_b(\tau) + \hat{\mu}_b(t).    
\end{aligned}
\end{equation} 
This recursive process successfully bridges the observation gap, yielding the current demand $\hat{N}_b(t)$ and AoI $\hat{A}_b(t)$ required for executing scheduling decisions.

\textbf{\textit{Step 3. Local-sensing-aware Max-Weight Scheduling:}}
Given the estimated parameters $\hat{\boldsymbol{\theta}}_b(t)=[\hat{\rho}_b(t),\hat{\mu}_b(t)]^\top$ from Step 1 and the predicted states $\hat{N}_b(t)$ and $\hat{A}_b(t)$ from Step 2, the BS determines which areas to update subject to the bandwidth constraint \eqref{eq:c1}. We now derive the scheduling index from a one-slot Lyapunov drift~\cite{kadota2018scheduling, raghunathan2008index, zhao2025optimizing}. 

We first define a linear Lyapunov function as
\begin{equation}
    L(t)\;\triangleq\;\sum_{b=1}^{B}\beta_b\,A_b(t),
\label{eq:mw-lyap}
\end{equation}
where $\beta_b > 0$ is a tunable, area-specific parameter.
Based on the estimated AoI dynamics, the one-slot Lyapunov drift is
\begin{equation}
\begin{aligned}
    \Delta(t)  \triangleq & ~\mathbb{E}[L(t+1)-L(t)\mid\mathcal{H}(t)] \\
    =&\sum_{b=1}^{B}\beta_b\Bigl[(\hat{\rho}_b(t)-1)\hat{A}_b(t)+\hat{\rho}_b(t)\hat{N}_b(t)+\hat{\mu}_b(t)\Bigr]\\
    &-\sum_{b=1}^{B}\beta_b\hat{\rho}_b(t)\hat{A}_b(t)u_b(t). \label{eq:mw-drift}    
\end{aligned}
\end{equation}
Therefore, minimizing the drift is equivalent to maximizing $\sum_b \beta_b\hat{\rho}_b(t)\hat{A}_b(t)u_b(t)$,
which can be achieved by selecting the $K$ areas with the largest weights
\begin{equation}
    W_b(t)=\beta_b\hat{\rho}_b(t)\hat{A}_b(t).
\label{eq:mw-index-generic}
\end{equation}
We define the weighting coefficient by aligning with the stationary randomized benchmark in Section \ref{sec:randomized_policy}
\begin{equation}
    \beta_b \triangleq \frac{1}{c_b},\qquad 
    c_b\triangleq 1-\hat{\rho}_b(t)+\hat{\rho}_b(t)\hat\eta_b^{\mathrm{R}}(t),
    \label{eq:mw-weights}
\end{equation}
where $\eta_b^{\mathrm{R}}(t)$ is obtained by water-filling in \eqref{eq:optimal_randomized_prob} with $\hat{\boldsymbol{\theta}}_b(t)$.
Substituting this into \eqref{eq:mw-index-generic} gives the proposed LocMW index
\begin{equation}
    W_b^{\mathrm{L}}(t)= \frac{\hat{\rho}_b(t) \hat{A}_b(t)}{1-\hat{\rho}_b(t)+\hat{\rho}_b(t) \hat\eta_b^{\mathrm R}(t)}. \label{eq:mw-index}
\end{equation}
At each slot $t$, the BS deterministically schedules the subset $\mathcal{K}^{\mathrm{L}}(t)$ containing the $K$ areas with the largest values of \eqref{eq:mw-index}. The scheduling decision is therefore
\begin{equation} \label{eq:final_decision}
    u_b^{\mathrm{L}}(t)=\begin{cases}
    1, & b\in \mathcal{K}^{\mathrm{L}}(t),\\
    0, & \text{otherwise}.
\end{cases}    
\end{equation}
As a result, the proposed LocMW algorithm dynamically prioritizes limited communication resources based on information staleness, underlying local perception capabilities, and user mobility characteristics. 
The overall procedure is outlined in Algorithm \ref{alg:online_learning}. 

\begin{remark}
    The LocMW index is local-sensing aware because the factor \(\hat\rho_b(t)\) captures the estimated persistence of users in the interested-but-unobserving state. A smaller \(\hat\rho_b(t)\) indicates that users in area \(b\) are highly likely to regain local visibility or lose interest, reducing the expected benefit of a BS update. 
\end{remark}

\noindent \textbf{Computational Complexity.}
At each time slot, parameter estimation across $B$ areas incurs a complexity of $\mathcal{O}(B)$, the $d$-step state prediction requires $\mathcal{O}(dB)$ operations, and executing the Max-Weight scheduling takes $\mathcal{O}(B\log B)$. Consequently, the overall worst-case per-slot computational complexity is $\mathcal{O}(B\log B+dB)$. This demonstrates that the proposed LocMW policy is highly scalable and computationally efficient, making it well-suited for time-sensitive applications such as CP.

\begin{algorithm}[t]
	\caption{The LocMW Scheduling Policy}
	\label{alg:online_learning}
	\LinesNumbered 
	\KwIn{$\mathcal{B}$, $K$, $d$, $\Theta$, $\zeta$}
	\KwOut{$u_b^{\mathrm{L}}(t), \forall b \in \mathcal{B}$}
    
Initialize $\mathbf{G}_b\leftarrow \zeta \mathbf{I}_2$ and $\mathbf{h}_b\leftarrow \mathbf{0}$ for all $b\in\mathcal{B}$\;

\For{$t=1,2,\ldots$}{
    \If{$t\le d$}{
        Use any warm-up schedule satisfying \eqref{eq:c1}\;
        \textbf{continue}\;
    }

    \tcp{Parameter Estimation}
    \For{$b\in\mathcal{B}$}{
        Construct $\mathbf{x}_b(t-d)=[N_b(t-d-1),1]^\top$ and $y_b(t-d)=N_b(t-d)$\;
        Update $\mathbf{G}_b\leftarrow \mathbf{G}_b+\mathbf{x}_b(t-d)\mathbf{x}_b(t-d)^\top$ and
        $\mathbf{h}_b\leftarrow \mathbf{h}_b+\mathbf{x}_b(t-d)y_b(t-d)$\;
        Compute $\hat{\boldsymbol{\theta}}_b(t)$ based on \eqref{eq:theta_Gh} and \eqref{eq:theta_proj}\;
    }

    Compute $\hat{\eta}_b^\mathrm{R}(t)$ by the water-filling rule in \eqref{eq:optimal_randomized_prob} using
    $\hat{\boldsymbol{\theta}}_b(t)$\;

    \tcp{State Prediction}
    \For{$b\in\mathcal{B}$}{
        Initialize $\hat{N}_b(t-d)\leftarrow N_b(t-d)$ and $\hat{A}_b(t-d)\leftarrow A_b(t-d)$\;
        Predict $\hat{N}_b(t)$ and $\hat{A}_b(t)$ via the recursion \eqref{eq:state_pred}\;

        \tcp{LocMW Index Derivation}
        Compute $\hat{c}_b^\mathrm{R}(t)\leftarrow
        1-\hat{\rho}_b(t)+\hat{\rho}_b(t)\hat{\eta}_b^\mathrm{R}(t)$\;
        Derive the LocMW index $W_b^{\mathrm{L}}(t)$ via \eqref{eq:mw-index}\;
    }

    Select $\mathcal{K}^{\mathrm{L}}(t)$ as the $K$ areas with the largest
    $W_b^{\mathrm{L}}(t)$\;
    Obtain the scheduling decision $u_b^{\mathrm{L}}(t)$ via \eqref{eq:final_decision}\;
    }
 \Return{${u}_b(t), \forall b \in \mathcal{B}$}
\end{algorithm}

\subsection{Performance Guarantee}
We next establish the performance guarantees of LocMW. We first consider the setting in which the parameters $(\rho_b,\mu_b), \forall b\in\mathcal{B}$ are known. In this case, the BS can exactly compute the conditional predictors $Z_b(t)\triangleq \mathbb E[A_b(t)| \mathcal{H}(t)], \tilde{N}_b(t)\triangleq \mathbb E[N_b(t)|\mathcal{H}(t)]$.


\begin{theorem}
    Suppose that \((\rho_b,\mu_b)\) are known for all \(b\in\mathcal B\) and LocMW schedules the $K$ areas with the largest weights $W_b(t)=\frac{\rho_b Z_b(t)}{1-\rho_b+\rho_b\eta_b^{\mathrm{R},*}}$, then its time-average sum AoI $\sum_{b\in\mathcal B}\bar A_b^{\mathrm{L}}$ satisfies
    \begin{equation}
        C^{\mathrm{LB}} \le \sum_{b\in\mathcal B}\bar A_b^{\mathrm{L}}\le C^{\mathrm R,*} \le 2C^{\mathrm{LB}} .
    \end{equation}
\end{theorem}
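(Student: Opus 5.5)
The plan is to sandwich $\sum_b \bar A_b^{\mathrm{L}}$ using earlier results plus one new Lyapunov-drift comparison. The outer inequalities need no new work. The left inequality $C^{\mathrm{LB}}\le\sum_b\bar A_b^{\mathrm{L}}$ is Theorem~\ref{thm:lower_bound}, since LocMW is an admissible policy. The right inequality $C^{\mathrm{R},*}\le 2C^{\mathrm{LB}}$ is Theorem~\ref{thm:randomized_guarantee}. So the task reduces to showing $\sum_b\bar A_b^{\mathrm{L}}\le C^{\mathrm{R},*}$.

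I would show this by comparing the one-slot drift of LocMW with that of the optimal randomized policy, in the spirit of the max-weight argument of Kadota et al.
\begin{itemize}
\item Take the linear Lyapunov function $L(t)=\sum_b\beta_b A_b(t)$, with $\beta_b=1/c_b$ and $c_b\triangleq 1-\rho_b+\rho_b\eta_b^{\mathrm{R},*}$.
\item The decision $u_b(t)$ is $\mathcal H(t)$-measurable and $\mathcal H(t)\subseteq\mathcal F_t$. Conditioning Lemma~\ref{lemma:one_step_drift} on $\mathcal H(t)$ therefore gives
\[
\mathbb E[A_b(t+1)\mid\mathcal H(t)]=\rho_b(1-u_b(t))Z_b(t)+\rho_b\tilde N_b(t)+\mu_b .
\]
\item Hence the conditional drift is $\sum_b\beta_b[(\rho_b-1)Z_b+\rho_b\tilde N_b+\mu_b]-\sum_b\beta_b\rho_bZ_bu_b(t)$.
\item The LocMW rule picks the $K$ largest $W_b=\beta_b\rho_bZ_b$, so it maximizes $\sum_b\beta_b\rho_bZ_bu_b$ over all feasible deterministic $u$.
\end{itemize}

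In particular, this maximum is at least the expectation of the same sum under the randomized policy's subset distribution, which is independent of $\mathcal H(t)$. That expectation equals $\sum_b\beta_b\rho_b\eta_b^{\mathrm{R},*}Z_b$. I would note briefly that any marginal vector in $\mathcal P_K$ is realizable by a distribution over subsets of size at most $K$, for instance by systematic sampling. Since $\beta_b[(\rho_b-1)-\rho_b\eta_b^{\mathrm{R},*}]=-1$, the drift under LocMW then satisfies
\[
\Delta(t)\le-\sum_b Z_b(t)+\sum_b\beta_b\bigl(\rho_b\tilde N_b(t)+\mu_b\bigr).
\]

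Next I would take total expectations, using $\mathbb E[Z_b]=\mathbb E[A_b]$ and $\mathbb E[\tilde N_b]=\mathbb E[N_b]$, then telescope over $t=1,\dots,T$ and divide by $T$. The left side $\frac1T\sum_b\beta_b(\mathbb E[A_b(T+1)]-\mathbb E[A_b(1)])$ vanishes by Lemma~\ref{lemma:stability}, which holds for any admissible policy. The Cesàro average of $\mathbb E[N_b(t)]$ tends to $\lambda_b$ under the INAR(1) model, so $\rho_b\lambda_b+\mu_b=\lambda_b$. Taking $\limsup$ then yields
\[
\sum_b\bar A_b^{\mathrm L}\le\sum_b\lambda_b/c_b=C^{\mathrm R,*},
\]
by \eqref{eq:optimal_randomized_value}.

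The main obstacle is the comparison step. It must be made rigorous that LocMW's deterministic choice dominates the randomized policy's conditional expectation slot by slot, which requires realizing the marginals $\eta^{\mathrm{R},*}$ with subsets of size at most $K$. Care is also needed with $\limsup$ versus $\lim$ when passing to time averages; I would handle this by bounding $\limsup$ of the left-hand average by the limit of the right-hand side, which exists.
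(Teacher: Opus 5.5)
Your proposal is correct and is essentially the paper's own proof: the same linear Lyapunov function with $\beta_b=1/c_b^{\mathrm{R},*}$, the same $\mathcal{H}(t)$-conditioned drift, the same slot-wise comparison of the Max-Weight choice against $\boldsymbol{\eta}^{\mathrm{R},*}$ (your systematic-sampling remark is a concrete form of the paper's convex-hull statement), the same telescoping via Lemma~\ref{lemma:stability}, and the outer inequalities taken from Theorems~\ref{thm:lower_bound} and~\ref{thm:randomized_guarantee}. One caveat you share with the paper: the drift argument bounds $\limsup_{T\to\infty}\frac{1}{T}\sum_{t=1}^{T}\mathbb{E}\bigl[\sum_b A_b(t)\bigr]$, which can be strictly smaller than $\sum_b\bar A_b^{\mathrm{L}}$ (a sum of per-area $\limsup$s), so the middle inequality tacitly needs the per-area time averages to converge; likewise, the left inequality inherits Assumption~\ref{ass:mf_decoupling} through Theorem~\ref{thm:lower_bound}.
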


\begin{IEEEproof}
    By design, the LocMW policy maximizes the last term in \eqref{eq:mw-drift}.
    Since the optimal randomized marginal vector
    \(\boldsymbol{\eta}^{\mathrm{R},*}\in[0,1]^B\) with
    \(\sum_b\eta_b^{\mathrm{R},*}\le K\) lies within the convex hull of the feasible scheduling set, we have
    \begin{equation}
    \sum_{b=1}^B \beta_b\rho_b Z_b(t)u_b^{\mathrm{L}}(t) \ge \sum_{b=1}^B \beta_b\rho_b Z_b(t)\eta_b^{\mathrm{R},*}.    
    \end{equation}
    Therefore, the one-slot drift satisfies
    \begin{equation}
       \Delta(t)\le\sum_{b=1}^B\beta_b\bigl[-c_b^{\mathrm{R},*}Z_b(t)+\rho_b\tilde{N}_b(t)+\mu_b\bigr], 
    \end{equation}
    where $c_b^{\mathrm{R},*} \triangleq 1 - \rho_b + \rho_b \eta_b^{\mathrm{R},*}$ denotes the ideal randomized benchmark decay factor with known parameters.
    \rev{Applying the law of iterated expectations}, averaging over \(t=1,\ldots,T\), and letting \(T\to\infty\), the telescoping Lyapunov drift vanishes due to the mean-rate stability established in Lemma \ref{lemma:stability}. Applying asymptotic properties $\lim_{T\to\infty}\frac{1}{T}\sum_{t=1}^T\mathbb E[N_b(t)]=\lambda_b$ and $\mu_b=(1-\rho_b)\lambda_b$ yields
    \begin{equation}
        \sum_{b=1}^B \beta_b c_b^{\mathrm{R},*}\bar A_b^{\mathrm{L}} \le \sum_{b=1}^B \beta_b\lambda_b.
    \end{equation}
    
    Since \(\beta_b c_b^{\mathrm{R},*}=1\), this simplifies to
    \begin{equation}
        \sum_{b=1}^B \bar A_b^{\mathrm{L}} \le \sum_{b=1}^B\frac{\lambda_b}{c_b^{\mathrm{R},*}} = C^{\mathrm{R},*}.
    \end{equation}

    With the lower bound in
    Section \ref{sec:lower_bound} and the randomized-policy guarantee in Section \ref{sec:randomized_policy}, the proof is completed.  
\end{IEEEproof}

Next, we consider the case where $(\rho_b,\mu_b)$ are unknown. We introduce two fundamental conditions in the following. Assumption~\ref{assump: parameter} imposes a persistence-of-excitation condition, ensuring that the delayed demand samples exhibit sufficient temporal variation to identify $(\rho_b,\mu_b)$ \cite{lai1982least, abbasi2011improved}. Assumption~\ref{ass:locmw_index_regularity} establishes a local stochastic Lipschitz condition, which guarantees that small parameter estimation errors translate into bounded perturbations in the scheduling weights \cite{stahlbuhk2021learning}.
\begin{assumption}\label{assump: parameter}
    For each $b\in\mathcal{B}$, there exist constants $\alpha_b>0$ and $T_b^{\mathrm{PE}}<\infty$ such that, for all $t\ge T_b^{\mathrm{PE}}+d$, the empirical information matrix satisfies: 
    \begin{equation}
        \lambda_{\min}\left(\sum_{\tau=2}^{t-d}\mathbf{x}_b(\tau)\mathbf{x}_b(\tau)^\top\right) \ge \alpha_b (t-d),
    \end{equation}
   where $\lambda_{\min}(\cdot)$ denotes the minimum eigenvalue of a matrix.
\end{assumption}

Let $\boldsymbol{\theta} \triangleq [\boldsymbol{\theta}_1, \dots, \boldsymbol{\theta}_B]^\top \in \Theta^B$ denote a candidate parameter vector. 
We define the LocMW index map as $\boldsymbol{\Phi}(t; \boldsymbol{\theta}) \triangleq [\Phi_1(t; \boldsymbol{\theta}), \dots, \Phi_B(t; \boldsymbol{\theta})]^\top$ with $\Phi_b(t; \boldsymbol{\theta}) \triangleq \frac{\rho_b A_b(t; \boldsymbol{\theta}_b)}{1 - \rho_b + \rho_b \eta_b^{\mathrm{R}}(\boldsymbol{\theta})}$, where $N_b(t; \boldsymbol{\theta}_b)$ and $A_b(t; \boldsymbol{\theta}_b)$ are recursively propagated via \eqref{eq:state_pred} by substituting $\boldsymbol{\theta}_b$ for $\hat{\boldsymbol{\theta}}_b(t)$, $\eta_b^{\mathrm{R}}(\boldsymbol{\theta})$ denotes the randomized marginal probability from \eqref{eq:optimal_randomized_prob} under $\boldsymbol{\theta}$. We impose the following regularity condition.

\begin{assumption}\label{ass:locmw_index_regularity}
    There exist constants $r_{\boldsymbol{\theta}} > 0$ and $L_d^{\mathrm{L}} < \infty$, alongside non-negative $\mathcal{H}(t)$-measurable random variables $\Gamma_d(t)$ satisfying $\sup_{t \ge 1} \mathbb{E}[\Gamma_d^2(t)] \le (L_d^{\mathrm{L}})^2$.
    For any pair $\boldsymbol{\theta}, \boldsymbol{\theta}' \in \Theta^B$ satisfying $\|\boldsymbol{\theta} - \boldsymbol{\theta}^*\|_{2,\infty} \vee \|\boldsymbol{\theta}' - \boldsymbol{\theta}^*\|_{2,\infty} \le r_{\boldsymbol{\theta}}$, where $\|\boldsymbol{\theta} - \boldsymbol{\theta}'\|_{2,\infty} \triangleq \max_{b \in \mathcal{B}} \|\boldsymbol{\theta}_b - \boldsymbol{\theta}_b'\|_2$ and $\boldsymbol{\theta}^* \triangleq [\boldsymbol{\theta}_1^*, \dots, \boldsymbol{\theta}_B^*]^\top$ denotes the true parameter vector, the following local stochastic Lipschitz condition holds almost surely:
    \begin{equation}
        \big\| \boldsymbol{\Phi}(t; \boldsymbol{\theta}) - \boldsymbol{\Phi}(t; \boldsymbol{\theta}') \big\|_\infty \le \Gamma_d(t) \big\| \boldsymbol{\theta} - \boldsymbol{\theta}' \big\|_{2,\infty},
    \end{equation}
\end{assumption}
\rev{
Assumption~\ref{ass:locmw_index_regularity} establishes a local stochastic Lipschitz condition, where the random variable $\Gamma_d(t)$ bounds the sensitivity of the LocMW index to parameter estimation errors within a localized neighborhood.}

\begin{theorem}\label{thm:learned_locmw}

    Let $R_T^{\mathrm{LocMW}} \triangleq \big[ \sum_{t=1}^T \mathbb{E} \big[ \sum_{b \in \mathcal{B}} A_b(t) \big] - T C^{\mathrm{R},*} \big]^+$ denote the cumulative excess AoI of the proposed LocMW policy relative to the optimal randomized benchmark with known parameters. It satisfies
    \begin{equation}
        R_T^{\mathrm{LocMW}} = \mathcal{O}\left( K L_d^{\mathrm{L}} \sqrt{T \log(BT)} \right) + o(T).
    \end{equation} 
\end{theorem}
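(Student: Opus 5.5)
The plan is to run the same linear-Lyapunov drift argument as in the known-parameter theorem, but to compare LocMW's realized scheduling weights with the ideal weights $W_b(t)=\rho_b Z_b(t)/c_b^{\mathrm{R},*}$ built from the true parameters $\boldsymbol{\theta}^*$. We take $L(t)=\sum_b\beta_b^*A_b(t)$ with the fixed benchmark weights $\beta_b^*=1/c_b^{\mathrm{R},*}$, so that $\beta_b^*c_b^{\mathrm{R},*}=1$. By Lemma~\ref{lemma:one_step_drift} and the tower property over $\mathcal H(t)$, the true drift is
\begin{equation*}
\Delta(t)=\sum_b\beta_b^*\bigl[(\rho_b-1)Z_b(t)+\rho_b\tilde N_b(t)+\mu_b\bigr]-\sum_b \Phi_b(t;\boldsymbol{\theta}^*)u_b^{\mathrm{L}}(t).
\end{equation*}
Here we use that $u^{\mathrm L}(t)$ is $\mathcal H(t)$-measurable, and that $\Phi_b(t;\boldsymbol{\theta}^*)=\rho_bZ_b(t)/c_b^{\mathrm{R},*}$, because the certainty-equivalent recursion~\eqref{eq:state_pred} with the true parameters reproduces $Z_b(t)$ exactly.

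Next we add and subtract $\sum_b\Phi_b(t;\hat{\boldsymbol\theta}(t))u_b^{\mathrm L}(t)$. LocMW selects the top-$K$ entries of $\boldsymbol\Phi(t;\hat{\boldsymbol\theta}(t))$. Since $\boldsymbol\eta^{\mathrm R,*}$ lies in the convex hull of feasible schedules, this choice dominates $\sum_b\Phi_b(t;\hat{\boldsymbol\theta})\eta_b^{\mathrm R,*}$. Converting back to true weights costs at most $2K\|\boldsymbol\Phi(t;\hat{\boldsymbol\theta})-\boldsymbol\Phi(t;\boldsymbol\theta^*)\|_\infty$, because each of the two vectors, $u^{\mathrm L}$ and $\boldsymbol\eta^{\mathrm R,*}$, has $\ell_1$-mass at most $K$. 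This yields
\begin{equation*}
\Delta(t)\le\sum_b\beta_b^*\bigl[-c_b^{\mathrm R,*}Z_b(t)+\rho_b\tilde N_b(t)+\mu_b\bigr]+2K\,\epsilon(t),
\end{equation*}
where $\epsilon(t)=\|\boldsymbol\Phi(t;\hat{\boldsymbol\theta}(t))-\boldsymbol\Phi(t;\boldsymbol\theta^*)\|_\infty$. Summing over $t$, taking expectations, telescoping, and dividing out $\beta_b^*c_b^{\mathrm R,*}=1$ gives
\begin{equation*}
\sum_{t}\mathbb E\Bigl[\sum_bA_b(t)\Bigr]-TC^{\mathrm R,*}\le 2K\sum_t\mathbb E[\epsilon(t)]+\sum_b\beta_b^*\rho_b\sum_t\bigl(\mathbb E[N_b(t)]-\lambda_b\bigr)+\mathbb E[L(1)].
\end{equation*}
The middle term is $o(T)$ because the INAR(1) mean converges geometrically to $\lambda_b$. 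The boundary term is $O(1)$, and the warm-up slots $t\le d$ contribute $O(dBN_{\max})$ by boundedness.

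The main obstacle is controlling $\sum_t\mathbb E[\epsilon(t)]$. We would split on the good event $\mathcal G_t=\{\|\hat{\boldsymbol\theta}(t)-\boldsymbol\theta^*\|_{2,\infty}\le\min(r_{\boldsymbol\theta},\varepsilon_t)\}$, with $\varepsilon_t=c\sqrt{\log(BT)/(t-d)}$.

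\emph{Concentration.} For $t\ge T_b^{\mathrm{PE}}+d$, Assumption~\ref{assump: parameter} gives $\lambda_{\min}(\mathbf G_b(t))\ge\alpha_b(t-d)$. The regression noise $N_b(\tau)-\mathbf x_b(\tau)^\top\boldsymbol\theta_b$ is a bounded martingale difference, since $N_b\le N_{\max}$. The self-normalized bound of Abbasi-Yadkori et al.\ (or Azuma applied to $\mathbf h_b-\mathbf G_b\boldsymbol\theta_b$) then gives $\|\tilde{\boldsymbol\theta}_b(t)-\boldsymbol\theta_b\|_2\lesssim\sqrt{\log(BT/\delta)}/(\alpha_b\sqrt{t-d})+\zeta\|\boldsymbol\theta_b\|/(\alpha_b(t-d))$. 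Projection onto $\Theta$ is nonexpansive. Taking $\delta=1/T$ and a union bound over the $B$ areas and $T$ slots makes $\Pr(\mathcal G_t^c)\le 1/T$ for $t$ beyond a constant burn-in.

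\emph{Good event.} On $\mathcal G_t$, Assumption~\ref{ass:locmw_index_regularity} gives $\epsilon(t)\le\Gamma_d(t)\varepsilon_t$, so $\mathbb E[\epsilon(t)\mathbf 1_{\mathcal G_t}]\le L_d^{\mathrm L}\varepsilon_t$ by Cauchy--Schwarz. Summing, $\sum_t\varepsilon_t=O(\sqrt{T\log(BT)})$.

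\emph{Bad event and burn-in.} Off $\mathcal G_t$, the indices remain bounded: $\hat\rho\le\rho_{\max}<1$ and the predicted $\hat A_b$ is bounded, by $N_{\max}$-dependent constants, over the finite delay window given the bounded observed $A_b(t-d)$. Bounded $A_b$ follows from $N_b\le N_{\max}$ and geometric decay. Hence the bad-event contribution is $O(T\cdot 1/T)$, and the finite burn-in $\max_bT_b^{\mathrm{PE}}+d$ together with the times before $\varepsilon_t\le r_{\boldsymbol\theta}$ contributes $O(1)$, which is absorbed into $o(T)$.

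Combining these pieces gives $R_T^{\mathrm{LocMW}}=\mathcal O(KL_d^{\mathrm L}\sqrt{T\log(BT)})+o(T)$. The delicate points are, first, verifying that $\hat{\boldsymbol\theta}(t)$ is $\mathcal H(t)$-measurable, so that it can be frozen inside the conditional drift. Second, since the indices are not almost-surely bounded, we must confirm the uniform boundedness of $\boldsymbol\Phi$ off the good event. If that boundedness fails, we would instead apply Cauchy--Schwarz with the second moment of $A_b$ together with $\Pr(\mathcal G_t^c)\le T^{-2}$.
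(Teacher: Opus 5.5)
Your proposal is correct and follows essentially the same route as the paper: the linear Lyapunov function with the true benchmark weights $1/c_b^{\mathrm{R},*}$, the $2K\epsilon_t$ loss from comparing against $\boldsymbol\eta^{\mathrm{R},*}\in\operatorname{conv}(\mathcal{U}_K)$, telescoping with the transient term $D_T$, and a good-event/complement split of $\mathbb{E}[\epsilon_t]$ using the local Lipschitz assumption and uniformly bounded second moments. Your Azuma-plus-union-bound derivation of the estimation rate makes explicit what the paper only asserts as an MSE rate. Two small corrections: $A_b$ is only moment-bounded, not almost surely bounded, so your Cauchy--Schwarz fallback is the argument to use off the good event; and since $\varepsilon_t$ depends on $T$, the slots before $\varepsilon_t\le r_{\boldsymbol\theta}$ number $\mathcal{O}(\log(BT))$ rather than $\mathcal{O}(1)$, which is still absorbed into $o(T)$.
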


\begin{IEEEproof}
Define the feasible scheduling set as $\mathcal{U}_K \triangleq \big\{ \mathbf{u} \in \{0,1\}^B : \sum_{b \in \mathcal{B}} u_b \le K \big\}$. Let $\mathbf{w}^*(t)$ and $\mathbf{u}^*(t) \in \arg\max_{\mathbf{u} \in \mathcal{U}_K}\mathbf{w}^*(t)^\top \mathbf{u}$ denote the ideal Max-Weight index vector and scheduling decision under the true system parameters.
Similarly, let $\mathbf{w}^{\mathrm{L}}(t)$ and $\mathbf{u}^{\mathrm{L}}(t)$ denote the proposed LocMW index vector and scheduling decision. We define the maximum index error as $\epsilon_t \triangleq \|\mathbf{w}^{\mathrm{L}}(t) -\mathbf{w}^*(t) \|_\infty$. Since $\|\mathbf{u}\|_1 \le K$, the performance of the LocMW decision satisfies
\begin{equation}
\begin{aligned}
    \mathbf{w}^*(t)^\top \mathbf{u}^{\mathrm{L}}(t)
    &\ge \mathbf{w}^{\mathrm{L}}(t)^\top \mathbf{u}^{\mathrm{L}}(t) - K\epsilon_t \\
    &\ge \mathbf{w}^{\mathrm{L}}(t)^\top \mathbf{u}^*(t) - K\epsilon_t \\
    &\ge \mathbf{w}^*(t)^\top \mathbf{u}^*(t) - 2K\epsilon_t.
\end{aligned}
\end{equation}
Moreover, because the optimal randomized marginal probability vector lies within the convex hull of the feasible set, $\boldsymbol{\eta}^{\mathrm{R},*} \in \operatorname{conv}(\mathcal{U}_K)$, it follows that $\mathbf{w}^*(t)^\top \mathbf{u}^*(t) \ge \mathbf{w}^*(t)^\top \boldsymbol{\eta}^{\mathrm{R},*}$. Hence
\begin{equation}
\mathbf{w}^*(t)^\top \mathbf{u}^{\mathrm{L}}(t) \ge \mathbf{w}^*(t)^\top \boldsymbol{\eta}^{\mathrm{R},*} - 2K\epsilon_t.
\end{equation}

Since $\mathbf{u}^{\mathrm{L}}(t)$ is $\mathcal{H}(t)$-measurable, the conditional one-slot Lyapunov drift can be given as
\begin{equation}
\begin{aligned}
    \Delta(t)
    &= \sum_{b=1}^B \frac{(\rho_b-1)Z_b(t) + \rho_b\tilde{N}_b(t) + \mu_b}{c_b^{\mathrm{R},*}} - \mathbf{w}^*(t)^\top \mathbf{u}^{\mathrm{L}}(t) \\
    &\le \sum_{b=1}^B \frac{-c_b^{\mathrm{R},*} Z_b(t) + \rho_b\tilde{N}_b(t) + \mu_b}{c_b^{\mathrm{R},*}} + 2K\epsilon_t.
\end{aligned}
\end{equation}
\rev{Applying the law of iterated expectations}, we obtain 
\begin{equation}
    \mathbb{E}[\Delta(t)] \le - \mathbb{E}\bigg[\sum_{b=1}^B A_b(t)\bigg] + \sum_{b=1}^B \frac{\rho_b\mathbb{E}[N_b(t)] + \mu_b}{c_b^{\mathrm{R},*}} + 2K\mathbb{E}[\epsilon_t].
\end{equation}
Summing over $t=1,\ldots,T$, dividing by $T$, we obtain the upper bound on the average AoI
\begin{equation}
\begin{aligned}
    \sum_{t=1}^{T} \mathbb{E}\bigg[\sum_{b=1}^B A_b(t)\bigg] \le & ~\mathbb{E}[L(1)] - \mathbb{E}[L(T+1)] + 2K\sum_{t=1}^{T}\mathbb{E}[\epsilon_t]\\
    &+ \sum_{t=1}^{T} \sum_{b=1}^B \frac{\rho_b\mathbb{E}[N_b(t)]+\mu_b}{c_b^{\mathrm{R},*}}.   
\end{aligned}
\end{equation}
Considering $L(t) \ge 0$ and recalling the steady-state demand relation $\mu_b = (1-\rho_b)\lambda_b$, it yields:
\begin{equation}
    \sum_{t=1}^{T} \mathbb{E}\bigg[\sum_{b = 1}^B A_b(t)\bigg] - T C^{\mathrm{R},*} \le D_T + \mathbb{E}[L(1)] + 2K \sum_{t=1}^{T} \mathbb{E}[\epsilon_t],
\end{equation}
where $D_T = \sum_{b=1}^B \frac{\rho_b}{c_b^{\mathrm{R},*}} \sum_{t=1}^{T} \big( \mathbb{E}[N_b(t)] - \lambda_b \big)$ captures the transient state-prediction deviation relative to the steady-state demand and $D_T = \mathcal{O}(1)$ as the INAR(1) dynamics are stable.

It remains to bound the expected index error $\mathbb{E}[\epsilon_t]$. Under Assumption~\ref{assump: parameter} and the bounded martingale-difference noise condition, the projected ridge estimator achieves the following convergence rate:
\begin{equation}
    \begin{aligned}
        \left( \mathbb{E} \bigg[ \big\| \hat{\boldsymbol{\theta}}(t) - \boldsymbol{\theta}^* \big\|_{2,\infty}^2 \bigg] \right)^{1/2} = \mathcal{O}\left(\sqrt{\frac{\log(Bt)}{t}}\right) + \mathcal{O}\left(\frac{1}{t}\right).
    \end{aligned}
\end{equation}
Let $\mathcal{E}_t \triangleq \big\{ \| \hat{\boldsymbol{\theta}}(t) - \boldsymbol{\theta}^* \|_{2,\infty} \le r_{\boldsymbol{\theta}} \big\}$ denote the local concentration event. Conditioned on $\mathcal{E}_t$, Assumption~\ref{ass:locmw_index_regularity} guarantees that the induced LocMW index is locally stochastic Lipschitz, yielding
\begin{equation}
    \epsilon_t = \big\| \boldsymbol{\Phi}(t; \hat{\boldsymbol{\theta}}(t)) - \boldsymbol{\Phi}(t; \boldsymbol{\theta}^*) \big\|_\infty \le \Gamma_d(t) \big\| \hat{\boldsymbol{\theta}}(t) - \boldsymbol{\theta}^* \big\|_{2,\infty}.
\end{equation}
\rev{Moreover, the compactness of $\boldsymbol{\Theta}$, the positive denominator $1-\rho_b+\rho_b\eta_b^R(\theta)\ge 1-\rho_{\max}>0$, the finite window $d$, and the bounded AoI second moment collectively ensure that the index map has a uniformly bounded second moment $\sup_{t\ge 1}\mathbb{E}\left[\sup_{\theta\in\Theta^B}\|\Phi(t;\theta)\|_\infty^2\right]<\infty.$
Consequently, on the complement event $\mathcal{E}_t^c$, the error is constrained, rendering its expected contribution negligible. Taking the expectation and applying the Cauchy-Schwarz inequality, the expected index error is bounded by}
\begin{equation}
   \mathbb{E}[\epsilon_t] \le L_d^{\mathrm{L}} \cdot \mathcal{O}\left(\sqrt{\frac{\log(Bt)}{t}}\right) + \mathcal{O}\left(\frac{1}{t}\right). 
\end{equation}
Summing over $T$ yields
\begin{equation}
    \sum_{t=1}^{T} \mathbb{E}[\epsilon_t] = \mathcal{O}\left( L_d^{\mathrm{L}} \sqrt{T \log(BT)} \right) + \mathcal{O}(\log T).
\end{equation}
Finally, absorbing $D_T = \mathcal{O}(1)$ and $\mathcal{O}(\log T)$ into the $o(T)$, we obtain
\begin{equation}
    R_T^{\mathrm{LocMW}} = \mathcal{O}\left( K L_d^{\mathrm{L}} \sqrt{T\log(BT)} \right) + o(T),
\end{equation}
where $R_T^{\mathrm{LocMW}} \triangleq \big[ \sum_{t=1}^T \mathbb{E} \big[ \sum_{b \in \mathcal{B}} A_b(t) \big] - T C^{\mathrm{R},*} \big]^+$. This completes the proof.
\end{IEEEproof}



\section{Experiments}\label{sec:simulation}
In this section, we present numerical experiments to evaluate the proposed LocMW scheduling framework, comparing it against several baseline methods to demonstrate its superiority in AoI reduction and object detection under communication constraints.

\subsection{Experiment Settings}
\textbf{Setting.}
While the LocMW framework is broadly applicable to diverse wireless systems with local sensing and AoI-aware scheduling, we instantiate and evaluate it in the context of V2I-CP, the primary motivating application of this work.
Since standard CP datasets are specifically tailored for ego-user object detection, the number of users (vehicles) per scene is inherently limited (e.g., up to five in the V2X-Sim dataset). To assess the effectiveness of our framework in dense traffic, we utilize the pNEUMA \cite{barmpounakis2020new} and FLUID  \cite{Chen2026} datasets to simulate dense urban conditions.
The pNEUMA dataset provides large-scale, drone-captured vehicle trajectories over congested Athens. We use four days of recordings from drones d6 and d7, concatenating five consecutive 30-minute blocks per day and partitioning the coverage region into 543 areas via OpenStreetMap. The FLUID dataset offers fine-grained trajectories at signalized intersections. We treat each video sequence as a scene, retain only motorized users, and divide the region into $B = 201$ areas. Time is discretized into $0.1$s slots, mapping each vehicle to one area per slot. A vehicle's region of interest is defined as a radius of 80 m for pNEUMA and 60 m for FLUID.
Following distance-dependent sensing degradation and stochastic blockage models in vehicular environments~\cite{meireles2010experimental,hu2022where2comm}, We model the visibility of area $b$ to vehicle $u$ based on distance and local traffic density $v_{u,b}(t) = \exp\left[-\varepsilon\left(\frac{d_{u,b}(t)}{R_I} + \frac{n_b(t)}{n_{\max}}\right)\right]$ where $d_{u,b}(t)$ is the distance, $n_b(t)$ is the vehicle count in area $b$, and $n_{\max}$ is the maximum observed density.
We set $\zeta = 1$, $\rho_{\max} = 0.99$, and implement a 500-slot warm-up.

In addition, we evaluate the framework under realistic perception conditions using the V2X-Sim dataset \cite{li2022v2x}, a comprehensive V2X CP dataset simulated via SUMO and CARLA, and we randomly select 10 scenes. For the 3D object detection task, we adopt PointPillars \cite{lang2019pointpillars, ma2026birdcast} as the backbone detector. The feature maps are divided into $B = 400$ grids, with a data volume of $\delta = 1$ KB per grid. We comprehensively evaluate the system's performance based on both AoI and mAP.

\begin{figure*}[t]
\centering
\begin{subfigure}[b]{0.32\textwidth} 
  \includegraphics[width=\textwidth]{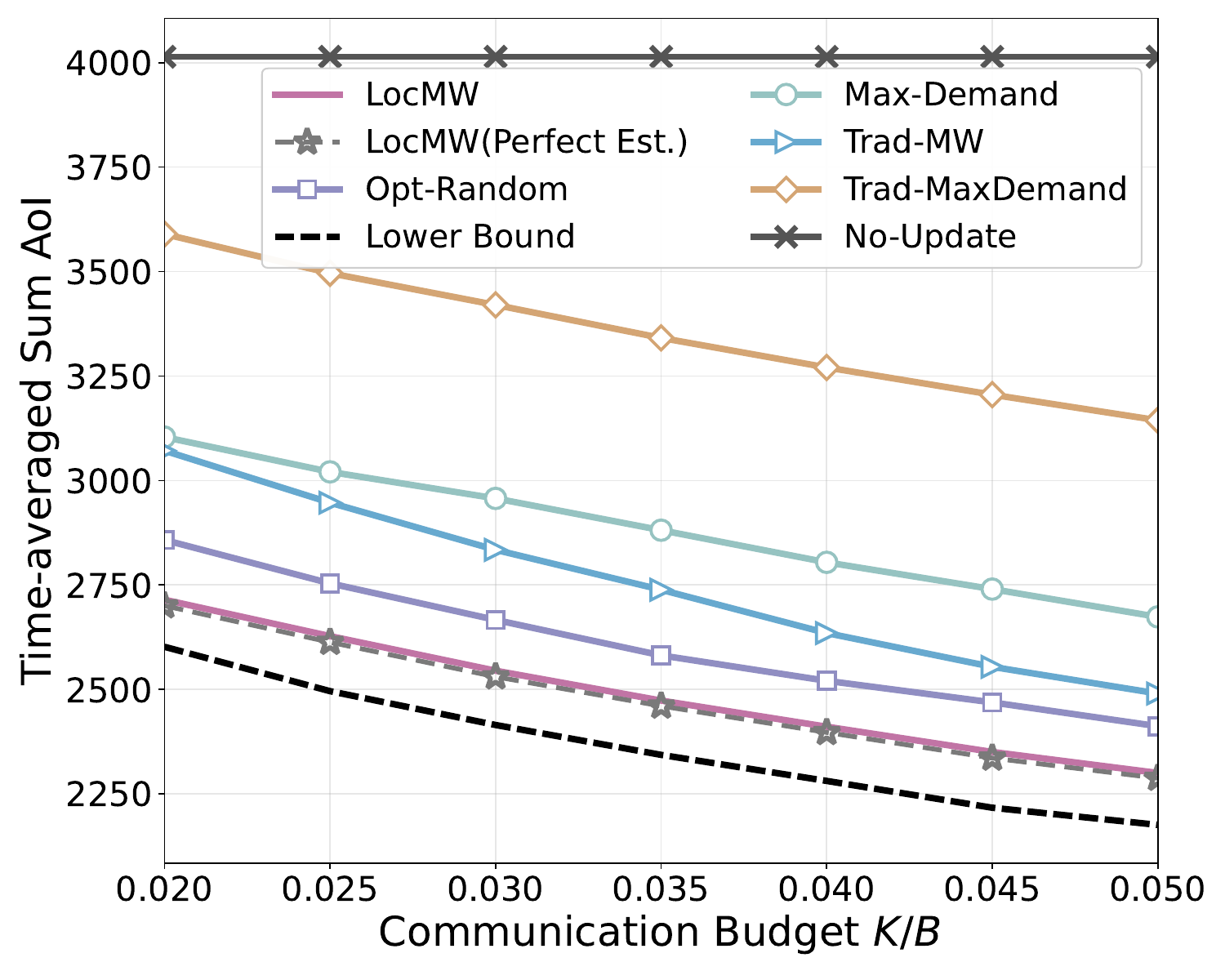}
  \caption{Time-average sum AoI versus communication budget $K/B$ ($d = 8$).} \label{fig:aoi_comm_1}
\end{subfigure}
\begin{subfigure}[b]{0.32\textwidth} 
  \includegraphics[width=\textwidth]{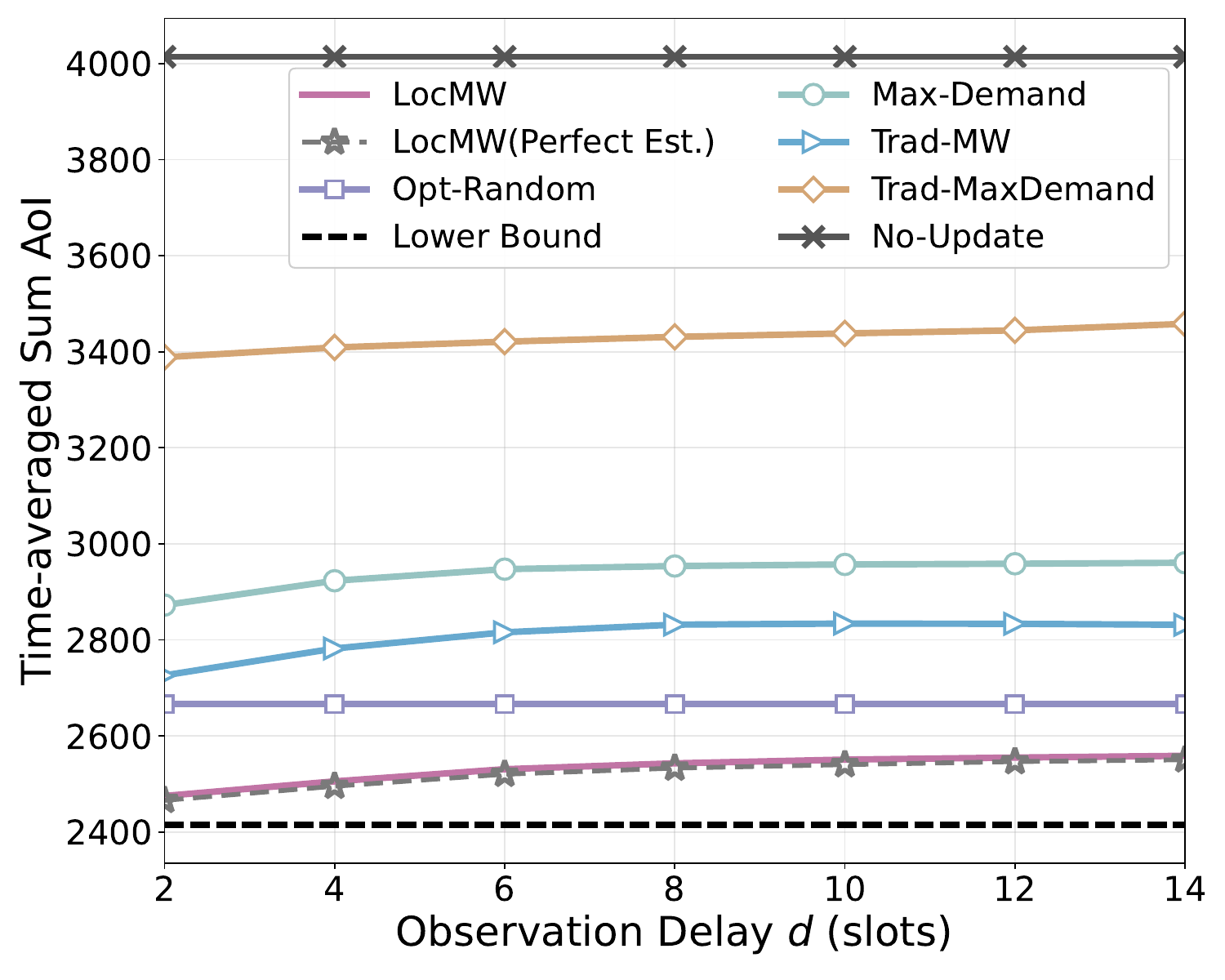}
  \caption{Time-average sum AoI versus observation delay $d$ in slot ($K/B = 0.03$).}\label{fig:aoi_delay_1}
\end{subfigure}
\begin{subfigure}[b]{0.32\textwidth} 
  \includegraphics[width=\textwidth]{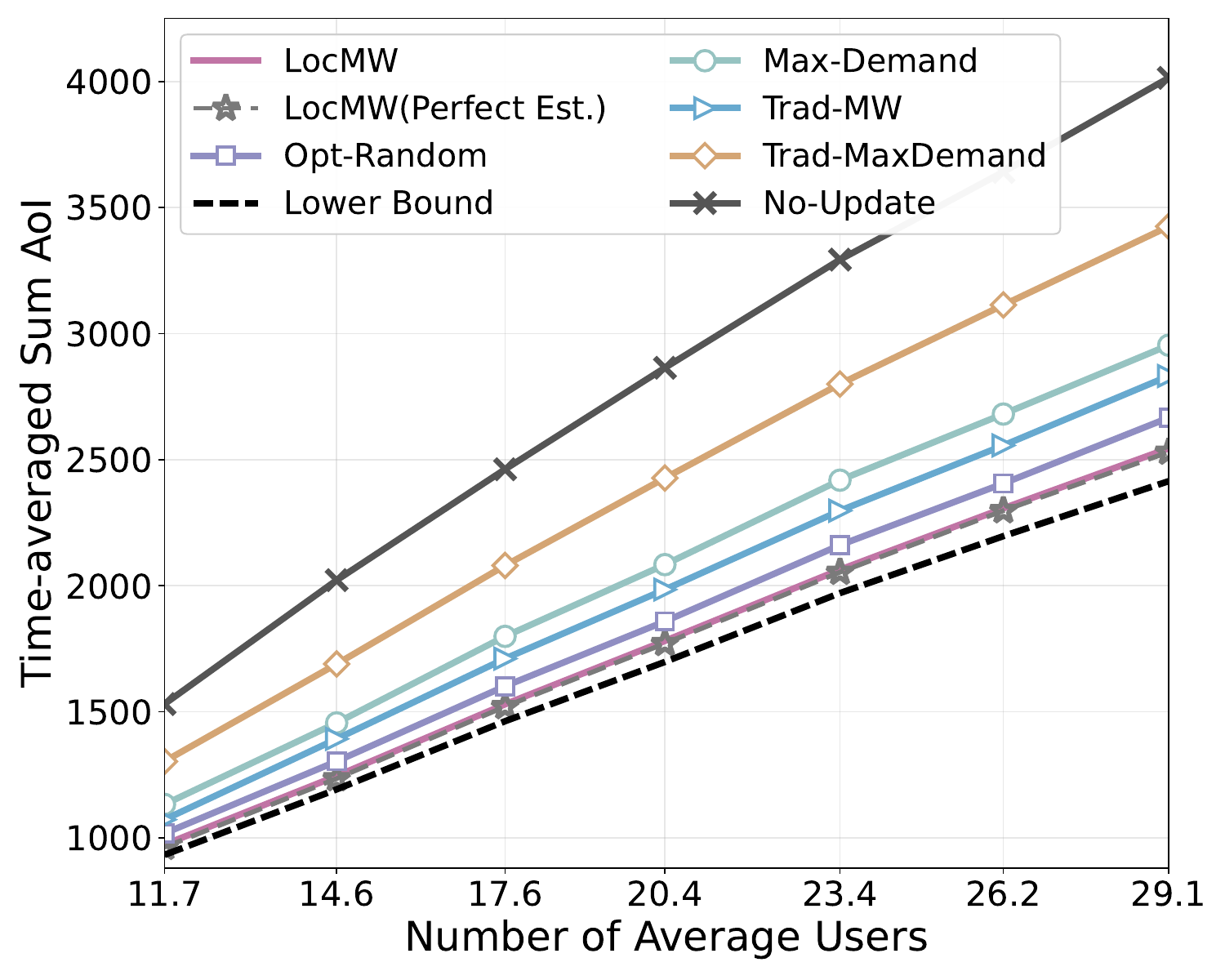}
  \caption{Time-average sum AoI versus number of users ($d = 8$, $K/B = 0.03$).} \label{fig:aoi_num_users_1}
\end{subfigure}
\caption{Performance evaluation of the time-average sum AoI under varying system settings using the pNEUMA dataset.}
\end{figure*}

\begin{figure*}[t]
\centering
\begin{subfigure}[b]{0.32\textwidth} 
  \includegraphics[width=\textwidth]{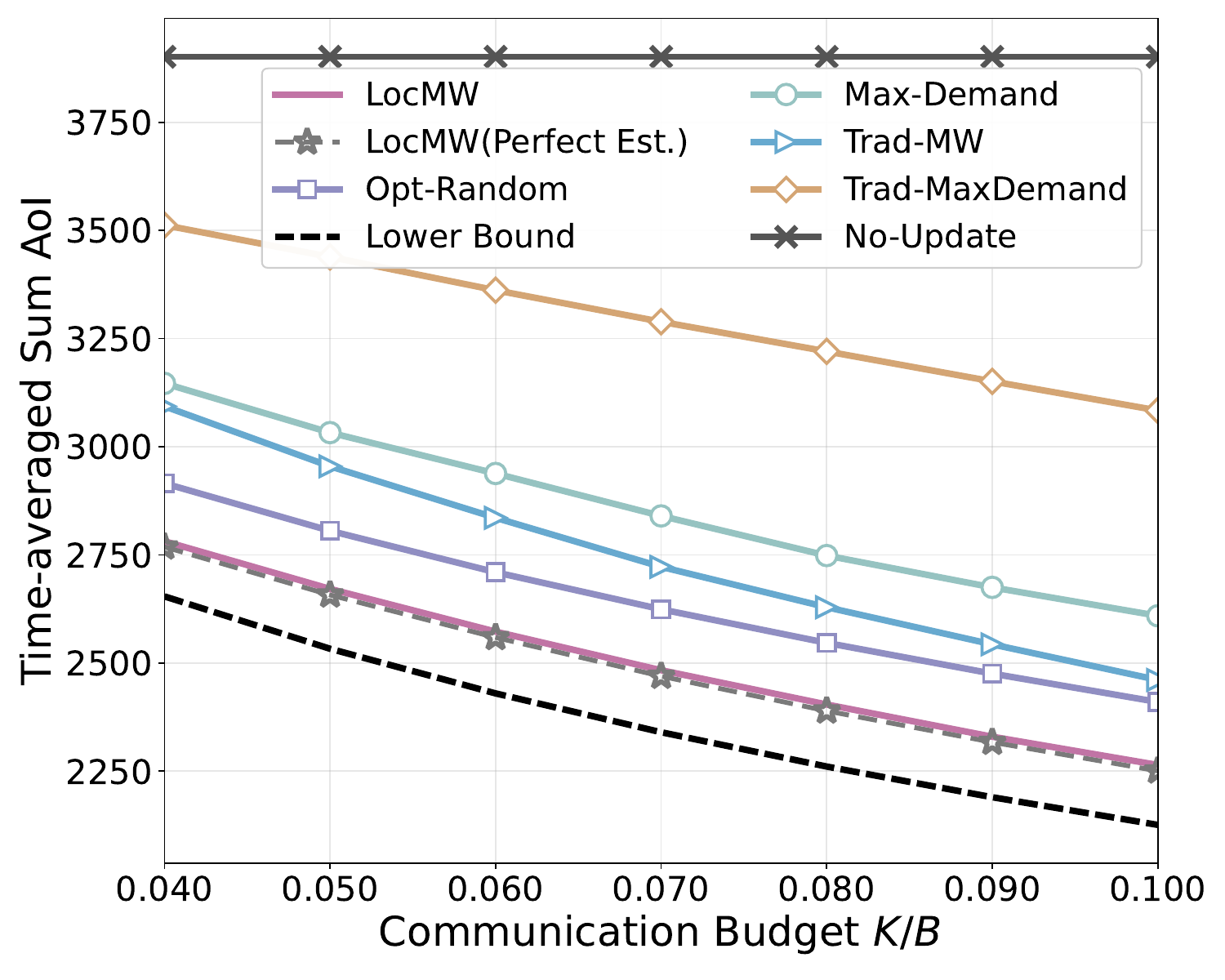}
  \caption{Time-average sum AoI versus communication budget $K/B$ ($d = 8$).} \label{fig:aoi_comm_2}
\end{subfigure}
\begin{subfigure}[b]{0.32\textwidth} 
  \includegraphics[width=\textwidth]{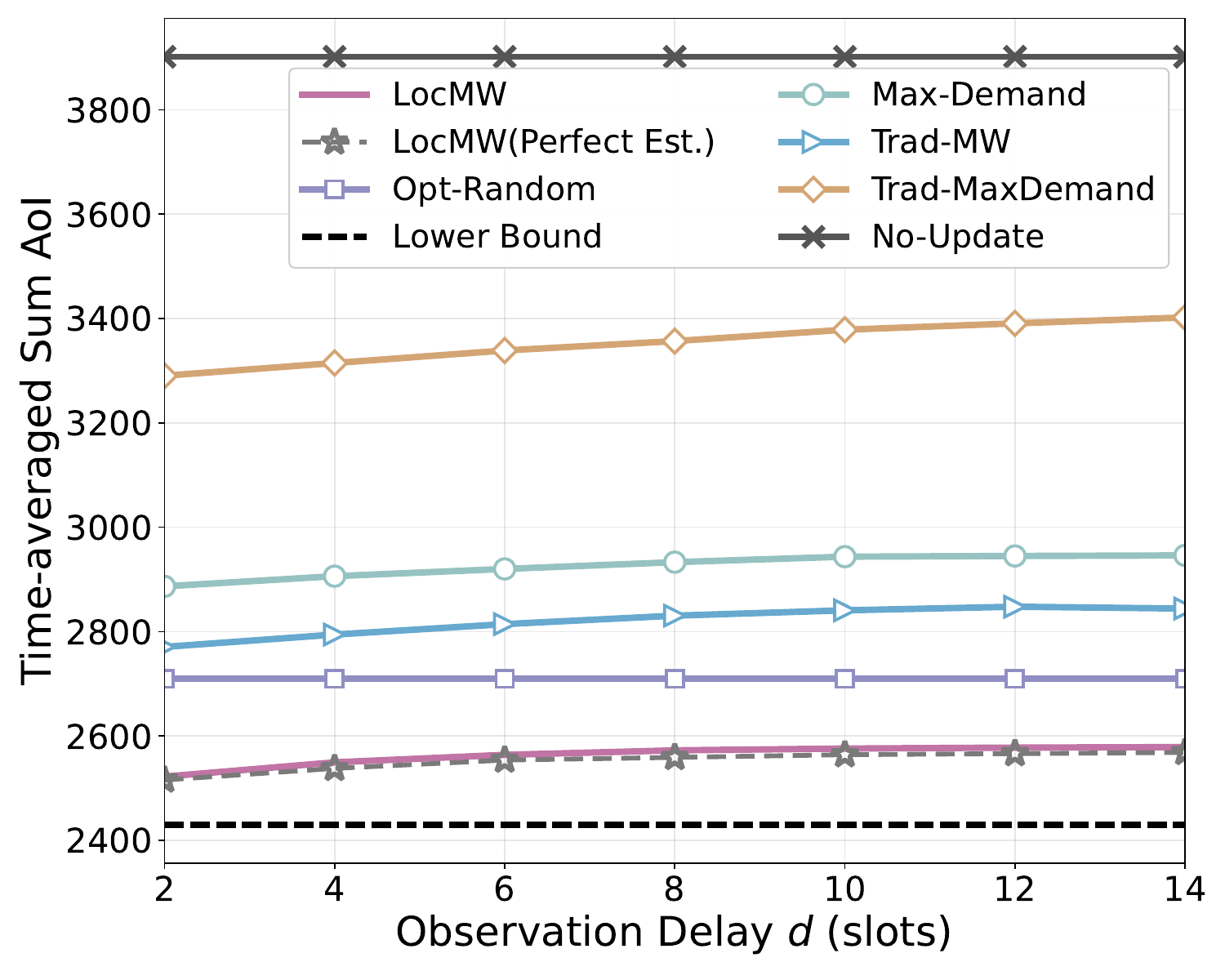}
  \caption{Time-average sum AoI versus observation delay $d$ in slot ($K/B = 0.06$).}\label{fig:aoi_delay_2}
\end{subfigure}
\begin{subfigure}[b]{0.32\textwidth} 
  \includegraphics[width=\textwidth]{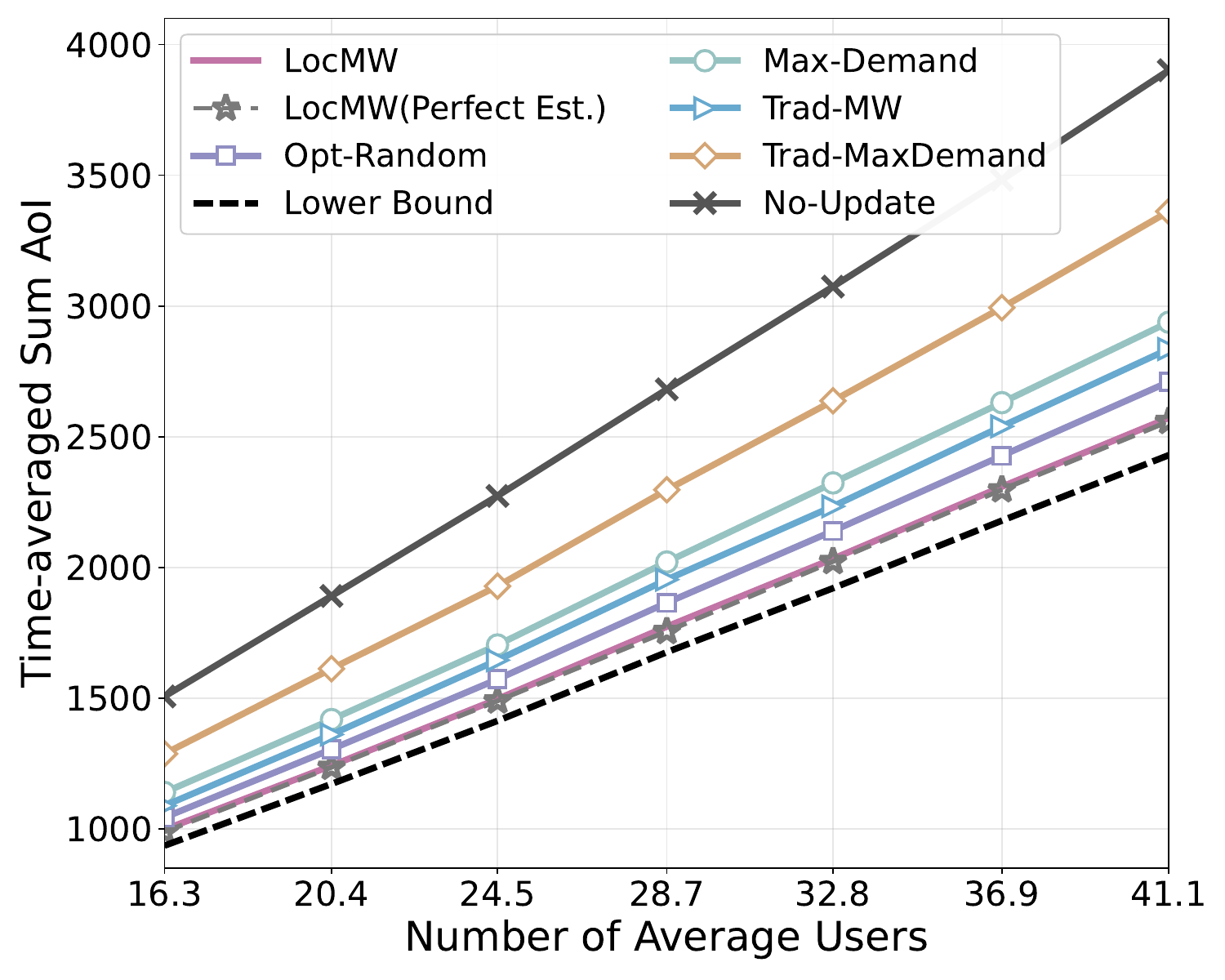}
  \caption{Time-average sum AoI versus number of users ($d = 8$, $K/B = 0.06$).} \label{fig:aoi_num_users_2}
\end{subfigure}
\caption{Performance evaluation of the time-average sum AoI under varying system settings using the FLUID dataset.} \label{fig:fluid_results}
\end{figure*}

\textbf{Baselines.}
We compare the proposed LocMW policy against several baselines. 

\begin{itemize}
\item \textbf{LocMW (Perfect Est.):}
This oracle benchmark assumes the BS has perfect knowledge of the true parameters ($\rho_b,\mu_b$), isolating the loss caused by online estimation.

\item \textbf{Max-Demand:}
This policy schedules \(K\) areas with the highest predicted demand \(\hat N_b(t)\). It prioritizes areas with the most vehicles requiring BS updates without AoI state.

\item \textbf{Traditional Max-Weight:}
This policy schedules the $K$ areas that maximize the index $\hat\rho_b\tilde A_b^{\mathrm T}(t)/(1-\hat\rho_b+\hat\rho_b\hat\eta_b^{\mathrm R})$, where $\tilde A_b^{\mathrm T}$ is the AoI state without accounting for local sensing. 

\item \textbf{Traditional Max-Demand:}
This policy schedules $K$ areas with the highest traditional demand estimates, $\tilde N_b^{\mathrm T}(t)$, without accounting for local data sensing, reflecting the behavior of a conventional demand-based scheduler.

\item \textbf{Stationary Randomized Policy:}
The stationary randomized benchmark developed in Section~\ref{sec:randomized_policy}. 

\item \textbf{Mean-field Lower Bound:}
The analytical lower bound, $C^{\mathrm{LB}}$, derived in Section~\ref{sec:lower_bound}.

\item \textbf{No Update:}
This policy schedules no BS transmissions, where AoI depends on users' local sensing only.

\end{itemize}

\begin{figure*}[t]
    \begin{minipage}[b]{0.48\textwidth}
        \centering
        \begin{subfigure}[b]{0.49\linewidth} 
            \includegraphics[width=\linewidth]{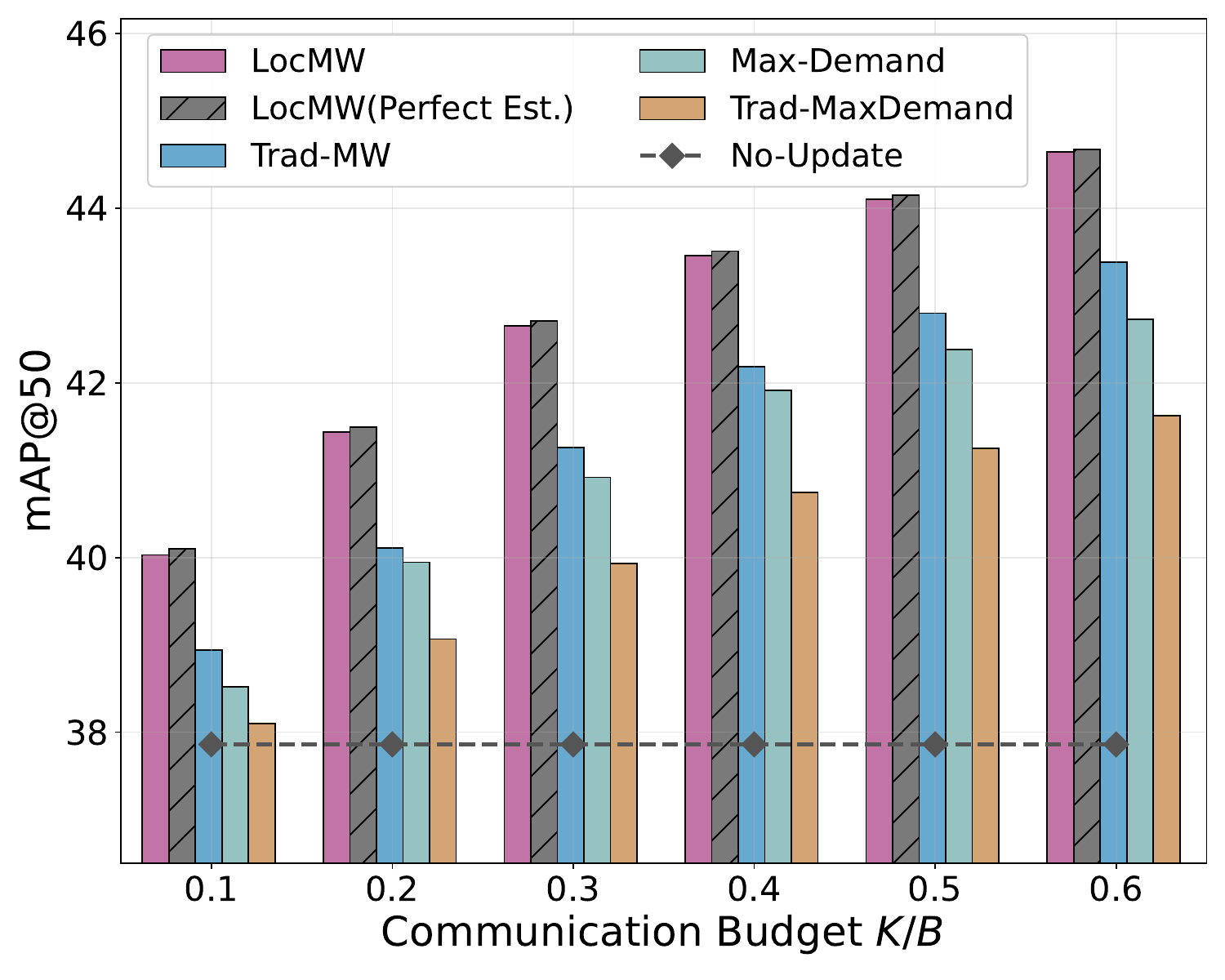}
            \caption{mAP@50 versus the communication budget $K/B$.}\label{fig:map50_vs_bandwidth}
        \end{subfigure}
        \hfill
        \begin{subfigure}[b]{0.49\linewidth} 
            \includegraphics[width=\linewidth]{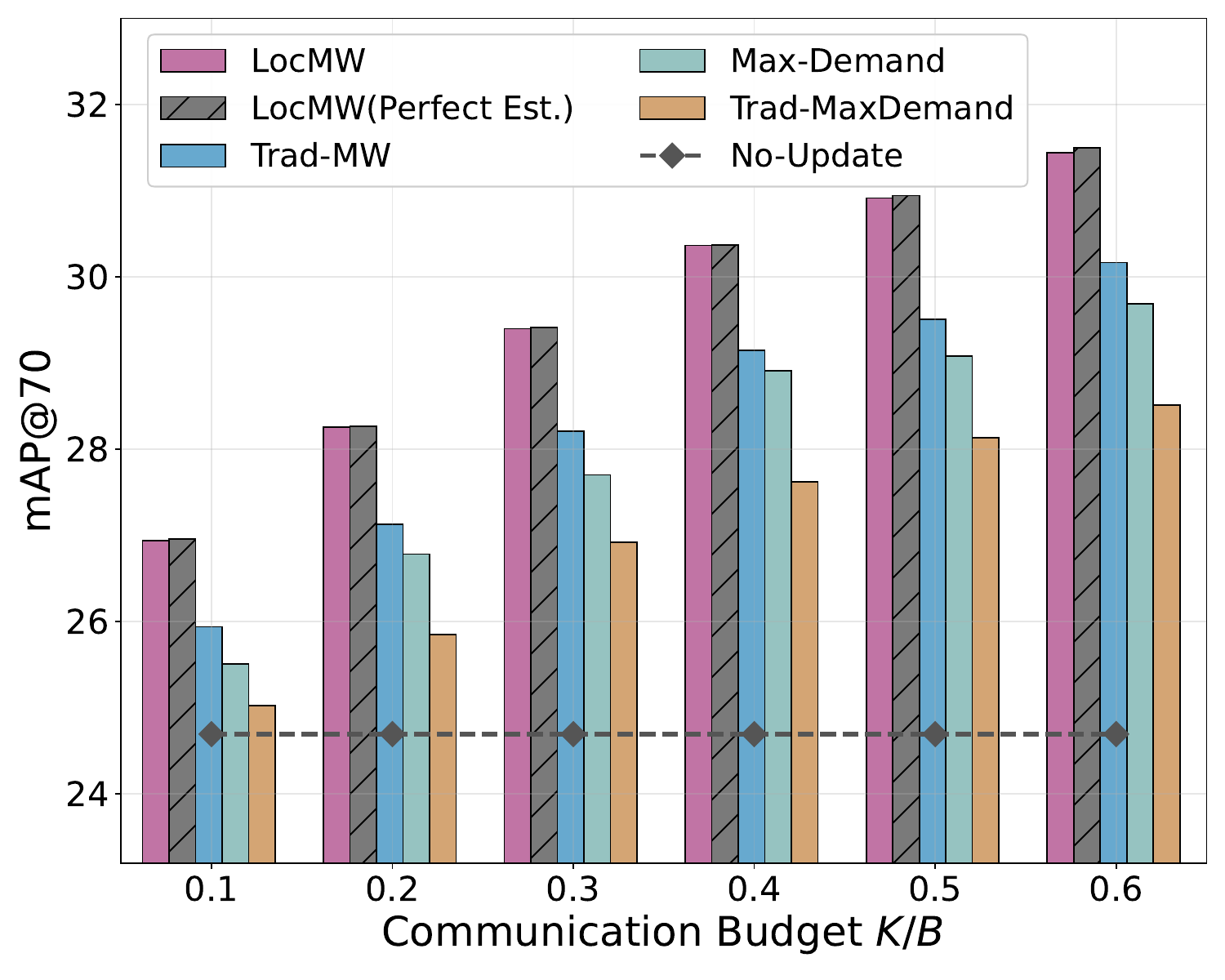}
            \caption{mAP@70 versus the communication budget $K/B$.}\label{fig:map70_vs_bandwidth}
        \end{subfigure}
        \caption{The perception performance (mAP@50/mAP@70) versus the communication budget $K/B$ with observation delay $d=8$ slots.}\label{fig:map_vs_bandwidth}
    \end{minipage}
    \hfill
    \begin{minipage}[b]{0.48\textwidth}
        \centering
        \begin{subfigure}[b]{0.49\linewidth} 
            \includegraphics[width=\linewidth]{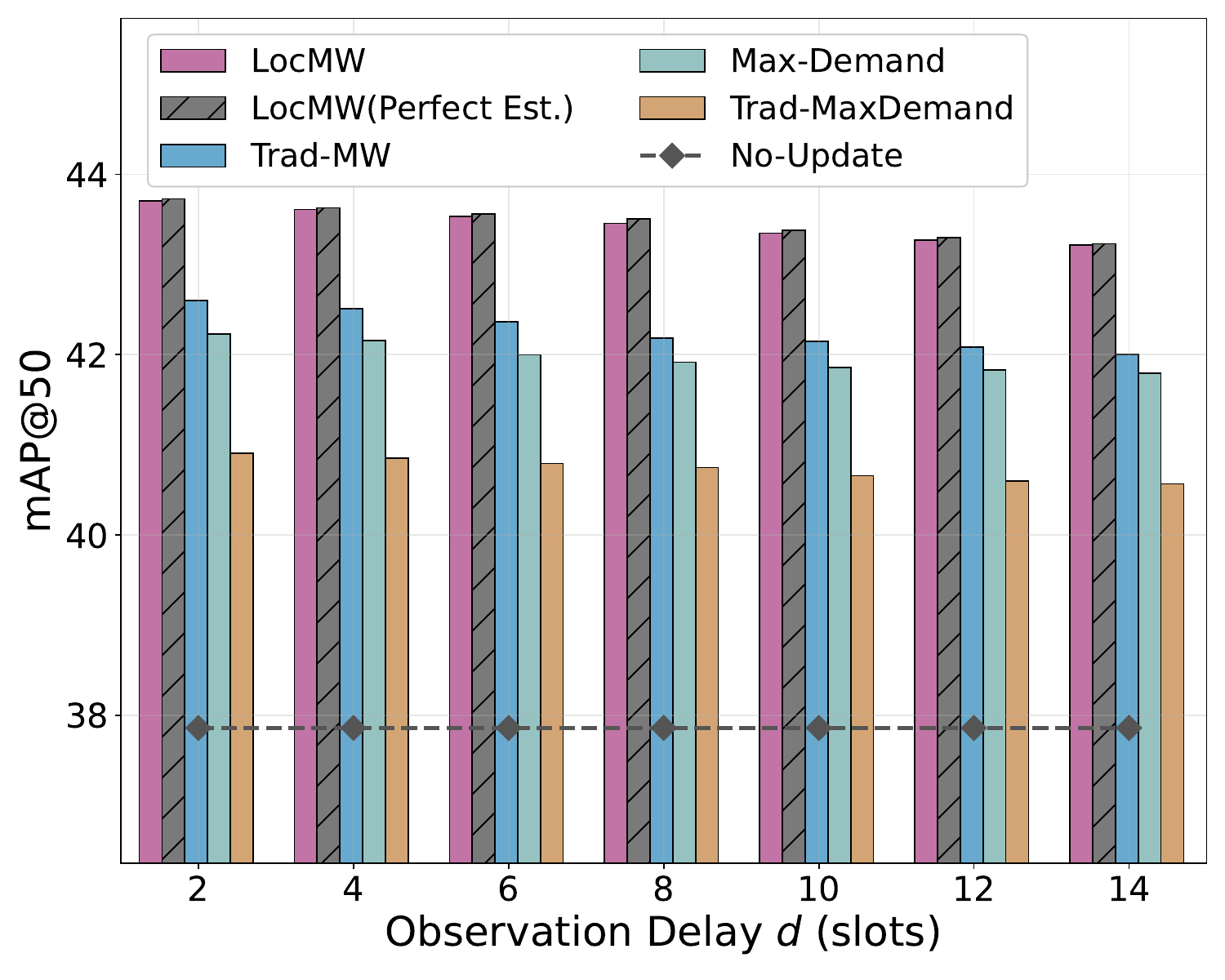}
            \caption{mAP@50 versus the observation delay $d$.}\label{fig:map50_vs_latency}
        \end{subfigure}
        \hfill
        \begin{subfigure}[b]{0.49\linewidth}
            \includegraphics[width=\linewidth]{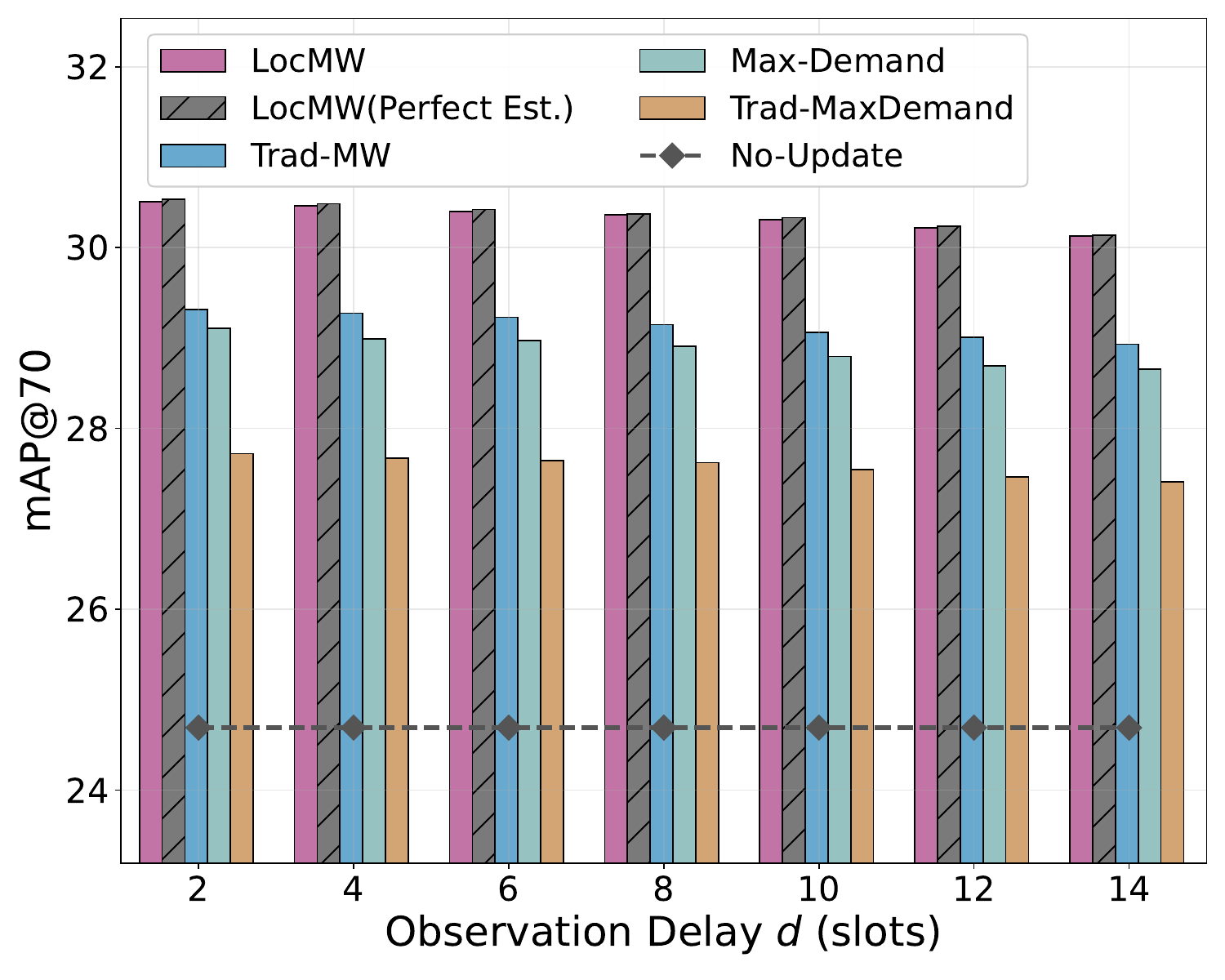}
            \caption{mAP@70 versus the observation delay $d$.} \label{fig:map70_vs_latency}
        \end{subfigure}
        \caption{The perception performance (mAP@50/mAP@70) versus the observation delay $d$ with communication budget $K/B$.}\label{fig:map_vs_latency}
    \end{minipage}

    \vspace{0.4cm} 

    \centering
    \includegraphics[width=0.96\textwidth]{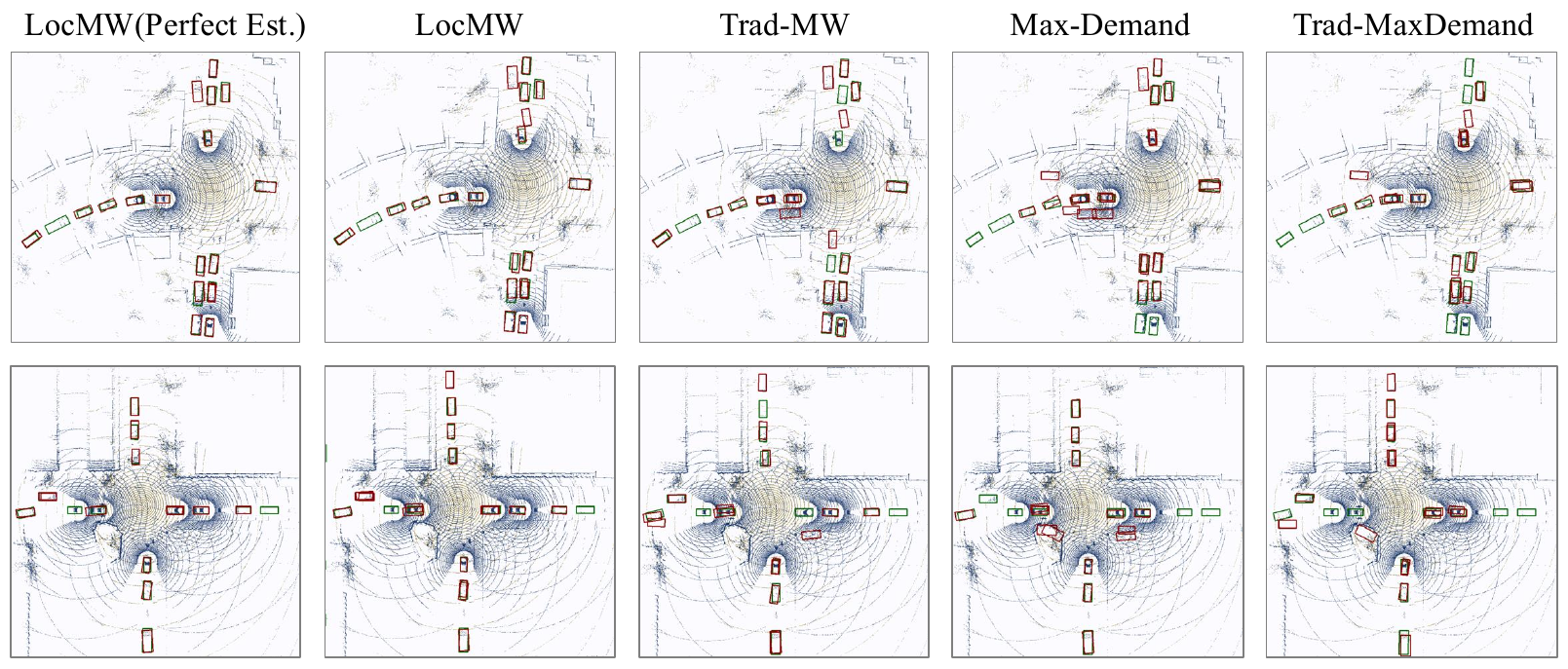}
    \caption{Qualitative visualization of 3D object detection on the V2X-Sim dataset, comparing the proposed LocMW policy against baseline scheduling schemes. \textcolor{darkred}{Red} bounding boxes denote predictions, while \textcolor{darkgreen}{green} ones represent the ground truth (GT).}\label{fig:vis}
    \vspace{-0.0cm}
\end{figure*}

\subsection{AoI Evaluation}
Fig.~\ref{fig:aoi_comm_1} illustrates the time-average sum AoI under various communication budgets $K/B$. As expected, increasing the communication budget consistently reduces the AoI for all update-based policies by allowing more areas to be refreshed per slot. \rev{The proposed LocMW policy achieves the lowest AoI among all online policies, performing comparably to the benchmark with perfect estimation. Notably, it reduces the sum AoI by up to 31.6\% compared to the Traditional Max-Demand baseline.}
This demonstrates that the online parameter estimation incurs negligible performance degradation. Although the stationary randomized policy outperforms the heuristic baselines, it remains inferior to LocMW because it cannot adapt to instantaneous AoI states or dynamic local sensing events. In contrast, the traditional max-weight policy is less effective because it neglects local-sensing-aware AoI dynamics, wasting limited bandwidth on redundant BS updates for areas already refreshed by onboard vehicular sensors. Furthermore, while Max-Demand effectively prioritizes areas with high user concentrations, its disregard for accumulated freshness degradation leads to suboptimal performance. Finally, the No Update scheme yields the highest AoI, underscoring the importance of BS updates in maintaining system-wide perception freshness.

Fig.~\ref{fig:aoi_delay_1} evaluates the impact of the observation delay $d$. The AoI of LocMW exhibits only a slight increase as $d$ grows, demonstrating its robustness to delayed state observations. This is because LocMW predicts unobserved system states and schedules areas based on the expected reduction in AoI, rather than relying only on stale observations. Moreover, the performance gap between LocMW and its perfect-estimation counterpart remains marginal across all values of $d$. The stationary randomized policy is insensitive to $d$ as it relies on long-term statistics. Traditional Max-Weight and Traditional Max-Demand yield significantly higher AoI, indicating that ignoring the local sensing process or the accumulated AoI leads to suboptimal decisions under delayed observations.

Fig.~\ref{fig:aoi_num_users_1} shows the performance of the proposed policy versus the average number of users. As expected, the time-average sum AoI increases for all policies as the user population grows. LocMW consistently achieves the best performance among the online policies and maintains a marginal gap to the theoretical lower bound across the whole range of user densities. Moreover, its performance advantage over Max-Demand and Traditional Max-Weight becomes more pronounced as the number of users increases, because the baselines' inaccurate modeling of local sensing leads to increasingly inefficient utilization of the limited communication budget. 

The simulation results on the FLUID dataset are shown in Fig.~\ref{fig:fluid_results}, which exhibit consistent trends with those observed in the pNEUMA dataset. 
Across all settings, LocMW consistently attains the lowest AoI among the online policies and performs comparably to its perfect-estimation counterpart. These results demonstrate that the proposed LocMW framework generalizes effectively to diverse real-world traffic traces, delivering robust AoI reduction under varying bandwidth budgets, observation delays, and traffic densities.


\subsection{Perception Evaluation}
Fig.~\ref{fig:map_vs_bandwidth} presents the 3D object detection accuracy on the V2X-Sim dataset under varying communication budget $K/B$. As the communication budget increases, both mAP@50 and mAP@70 improve across all update-based schemes, as a greater volume of features can be refreshed in each slot. 
\rev{The proposed LocMW policy consistently achieves the highest mAP-improving mAP@70 by up to 16.3\% relative to the Traditional Max-Demand baseline, and performs comparably to the perfect-estimation benchmark.}

In contrast, traditional Max-Weight and traditional Max-Demand exhibit inferior detection accuracy because they neglect local sensing-induced AoI reductions and tend to allocate bandwidth to grids that have already been refreshed locally. In addition, Max-Demand performs suboptimally as it prioritizes user density without explicitly accounting for accumulated information staleness. The No-Update baseline yields the poorest performance, confirming that BS updates are essential for collaborative perception.

Fig.~\ref{fig:map_vs_latency} shows the impact of observation delay on perception accuracy. As the delay increases, the mAP of most scheduling policies degrades, given that the BS makes decisions based on increasingly outdated network-state information. Nevertheless, LocMW demonstrates remarkable stability across different delay periods and consistently outperforms the heuristic and traditional baselines. These results show that the proposed local-sensing-aware AoI scheduling framework benefits task-level perception quality.

Fig.~\ref{fig:vis} provides qualitative visualization results of object detection on the V2X-Sim dataset. The proposed LocMW produces detection results that are more consistent with ground truth, showing better spatial alignment and fewer missed vehicles. In contrast, Trad-MW, Max-Demand, and Traditional-MaxDemand exhibit more incomplete or less accurate detections, particularly for vehicles located in peripheral or occluded regions. These visualizations confirm that local-sensing-aware AoI scheduling improves not only information freshness but also downstream collaborative perception quality.



\section{Conclusion}
\label{sec:conclusion}
In this paper, we have investigated AoI minimization for infrastructure-assisted collaborative perception under limited downlink bandwidth. Unlike conventional AoI scheduling optimizations, the proposed framework explicitly accounts for vehicles' local sensing capabilities, under which information freshness can be improved not only through BS broadcasts but also through users' own perception. To capture this distinctive feature, we have modeled the interested-but-unobserving user population as an INAR(1) process and derived a closed-form characterization of the time-average aggregate AoI. Based on this characterization, we have established a mean-field lower bound and an optimal stationary randomized benchmark. 
We then proposed LocMW, a local-sensing-aware Max-Weight scheduling policy. We have also provided the theoretical analysis, showing that LocMW incurs only sublinear cumulative excess AoI relative to the optimal randomized policy. Extensive experiments on pNEUMA, FLUID, and V2X-Sim datasets have further demonstrated that LocMW consistently reduces information staleness and improves downstream 3D object detection accuracy compared with competing baselines.



\appendices
\section{Proof of Lemma \ref{lemma:one_step_drift}}
\label{proof:one_step_drift}
Conditioned on the filtration $\mathcal{F}_t$, the scheduling decision $u_b(t)$, the demanding user count $N_b(t)$, and the individual AoI $a_{b,i}(t)$ are deterministic. Conversely, the indicators $X_i(t)$ and the new arrivals $\omega_b(t+1)$ are independent of $\mathcal{F}_t$. Thus, taking the conditional expectation $\mathbb{E}[\cdot \mid \mathcal{F}_t]$ of the AoI evolution in \eqref{eq:aoi_def} yields:
\begin{equation}
    \begin{aligned}
        \mathbb{E}[A_b(t+1) | \mathcal{F}_t] =& \mathbb{E} \Big[ \sum_{i=1}^{N_b(t)} X_i(t) ((1 - u_b(t)) a_{b,i}(t) + 1) | \mathcal{F}_t \Big]\\
        &+ \mathbb{E} \Big[\omega_b(t+1) | \mathcal{F}_t \Big]\\
        =& \sum_{i=1}^{N_b(t)} \rho_b \big[ (1 - u_b(t)) a_{b,i}(t) + 1 \big] + \mu_b\\
        =& \rho_b \big(1 - u_b(t)\big) A_b(t) + \rho_b N_b(t) + \mu_b,
    \end{aligned}
\end{equation}
which completes the proof.

\section{Proof of Lemma \ref{lemma:stability}}
\label{proof:stability}
We first analyze the worst-case scenario where the BS never schedules an update for area $b$, i.e., $u_b(t) \equiv 0$ for all $t$. Given the physical bounds $N_b(t) \le N_{\max}$ and $\mu_b \le \mu_{\max}$, \rev{applying Lemma \ref{lemma:one_step_drift} and the law of iterated expectations yields}
\begin{equation}
    \mathbb{E}[A_b(t+1)] \le \rho_b \mathbb{E}[A_b(t)] + \rho_b N_{\max} + \mu_{\max}.
\end{equation}
Defining $C \triangleq \rho_b N_{\max} + \mu_{\max}$ and unrolling this recursion from $t=1$, we obtain
\begin{equation}
\begin{aligned}
    \mathbb{E}[A_b(t)] &\le \rho_b^{t-1} \mathbb{E}[A_b(1)] + C \sum_{k=0}^{t-2} \rho_b^k\\
    &\overset{\text{(a)}}{\le} \mathbb{E}[A_b(1)] + \frac{\rho_b N_{\max} + \mu_{\max}}{1-\rho_b} \triangleq M < \infty.
\end{aligned}
\end{equation}
where inequality $(a)$ holds because $\rho_b \in [0, 1)$.
Since any admissible scheduling policy $u_b(t) \in \{0,1\}$ either maintains or reduces AoI compared to this passive baseline, the upper bound $\sup_{t \ge 1} \mathbb{E}[A_b(t)] \le M$ holds universally. Consequently, the time-averaged boundary difference satisfies
\begin{equation}
    0 \le \lim_{T \to \infty} \frac{1}{T} \big| \mathbb{E}[A_b(T+1)] - \mathbb{E}[A_b(1)] \big| \le \lim_{T \to \infty} \frac{2M}{T} = 0, 
\end{equation}
which directly implies
\begin{equation}
    \lim_{T\to\infty}\frac{1}{T}\big(\mathbb{E}[A_{b}(T+1)]-\mathbb{E}[A_{b}(1)]\big)=0,
\end{equation}
completing the proof.

\section{Proof of Theorem \ref{thm:lower_bound}}
\label{proof:lower_bound}
Conditioning the aggregate AoI recursion in \eqref{eq:aoi_def} on \(\mathcal{H}(t)\), and noting that \(u_b(t)\) is \(\mathcal{H}(t)\)-measurable, gives
\begin{equation} \label{eq:z_evolution}
\begin{aligned}
    Z_b^+(t+1)&=\mathbb{E}[A_b(t+1)|\mathcal{H}(t)] \\
    &= \rho_b(1-u_b(t))Z_b(t)+\tilde{M}_b(t+1). 
\end{aligned}
\end{equation}
Taking the time average of both sides yields
\begin{equation}
    z_b=\rho_b(z_b-x_b)+\lambda_b,
    \qquad
    x_b=\frac{\lambda_b-(1-\rho_b)z_b}{\rho_b}.
\end{equation}
\rev{Squaring both sides of \eqref{eq:z_evolution} and considering $(1-u_b(t))^2 = 1-u_b(t)$ as $u_b(t) \in \{0,1\}$, we obtain
\begin{equation} \label{eq:sqare_z}
    \begin{aligned}
        Z_b^2(t+1) = &\rho_b^2(1-u_b(t))Z_b^2(t)+ \tilde{M}_b^2(t+1) \\
        &+ 2\rho_b(1-u_b(t))Z_b(t)\tilde{M}_b(t+1).    
    \end{aligned}
\end{equation}
Let $Q_b(t) \triangleq (1-u_b(t))Z_b(t)$, we have
\begin{equation} \label{eq:res}
    \begin{aligned}
        &\frac{1}{T}\sum_{t=1}^T \mathbb{E}[Q_b(t)\tilde{M}_b(t+1)] - \lambda_b\frac{1}{T}\sum_{t=1}^T \mathbb{E}[Q_b(t)] \\
        =& \frac{1}{T}\sum_{t=1}^T \mathbb{E}[Q_b(t)(\tilde{M}_b(t+1) - \lambda_b)].
    \end{aligned}
\end{equation}
Applying the Cauchy-Schwarz inequality to the right-hand term gives
\begin{equation}
    \begin{aligned}
        &\left| \frac{1}{T}\sum_{t=1}^T \mathbb{E}\left[Q_b(t)\big(\tilde{M}_b(t+1) - \lambda_b\big)\right] \right| \\
        \leq & \left( \frac{1}{T}\sum_{t=1}^T \mathbb{E}[Q_b^2(t)] \right)^{1/2} \left( \frac{1}{T}\sum_{t=1}^T \mathbb{E}[\big(\tilde{M}_b(t+1) - \lambda_b\big)^2] \right)^{1/2}.
    \end{aligned}\label{cauchy}
\end{equation}
Under Assumption~\ref{ass:mf_decoupling}, $\lim_{T\to\infty}\frac{1}{T}\sum_{t=1}^{T}
\mathbb{E}[(\tilde M_b(t+1)-\lambda_b)^2]=0$.
By Jensen's inequality, we have $\mathbb{E}[Q_b^2(t)] \le \mathbb{E}[Z_b^2(t)] \le \mathbb{E}[A_b^2(t)]$. Considering the passive worst-case policy $u_b(t) \equiv 0$ and the physical bounds $N_b(t) \le N_{\max}$ and $\omega_b(t) \le \mu_{\max}$, we can bound the conditional second moment of the one-step AoI evolution using Young's inequality as $\mathbb{E}[A_b^2(t+1) |\mathcal{F}_t] \le (1+\delta)\rho_b(A_b(t)+N_{\max})^2 + (1+1/\delta)\mu_{\max}^2$ for any $\delta>0$. Since $\rho_b < 1$, choosing a sufficiently small $\delta>0$ yields $\mathbb{E}[A_b^2(t+1)] \le \alpha_b \mathbb{E}[A_b^2(t)] + C_b$ for some constants $\alpha_b < 1$ and $C_b < \infty$. Unrolling this recursion guarantees $\sup_{t \ge 1} \mathbb{E}[A_b^2(t)] < \infty$, which implies that $\mathbb{E}[Q_b^2(t)]$ remains bounded. Thus, from \eqref{eq:res} and setting the left hand side of \eqref{cauchy} to zero, we have
\begin{equation}
    \begin{aligned}
        &\lim_{T\to\infty}\frac{1}{T}\sum_{t=1}^T \mathbb{E}\left[(1-u_b(t))Z_b(t)\tilde{M}_b(t+1)\right] \\
        =& \lambda_b \lim_{T\to\infty}\frac{1}{T}\sum_{t=1}^T \mathbb{E}\left[(1-u_b(t))Z_b(t)\right] = \lambda_b(z_b-x_b).
    \end{aligned}
\end{equation}

Define $y_b \triangleq \lim_{T\to\infty}\frac{1}{T}\sum_{t=1}^T \mathbb{E}[Z_b^2(t)]$ and $w_b \triangleq \lim_{T\to\infty}\frac{1}{T}\sum_{t=1}^T\mathbb{E}[u_b(t)Z_b^2(t)]$. By taking the time-average expectation of \eqref{eq:sqare_z} and invoking Assumption~\ref{ass:mf_decoupling}, we obtain
\begin{equation}
y_b = \rho_b^2(y_b - w_b) + 2\rho_b\lambda_b(z_b-x_b) + \lambda_b^2.
\end{equation}
Rearranging the terms and using $\rho_b(z_b-x_b) = z_b - \lambda_b$ from the first-moment equality, this simplifies to
\begin{equation}
(1-\rho_b^2)y_b+\rho_b^2 w_b = 2\rho_b\lambda_b(z_b-x_b)+\lambda_b^2 = 2\lambda_b z_b-\lambda_b^2.
\end{equation}
}
Moreover, Jensen's inequality gives \(y_b\ge z_b^2\) and Cauchy--Schwarz inequality gives \(w_b\ge x_b^2/p_b\). Substituting the expression for \(x_b\) results in
\begin{equation}
    (1-\rho_b^2)z_b^2+\frac{\big(\lambda_b-(1-\rho_b)z_b\big)^2}{p_b}
    \le2\lambda_b z_b-\lambda_b^2 .
\end{equation}
Solving it for $z_b$ establishes the lower bound:
\begin{equation}
    z_b\ge L_b(p_b) = \lambda_b\frac{p_b+1}{p_b(1+\rho_b)+1-\rho_b}.
\end{equation}

Since \(\mathbb{E}[Z_b(t)]=\mathbb{E}[A_b(t)]\), we have \(z_b=\bar A_b\) in the time-average sense. Therefore, every admissible policy satisfies
\begin{equation}
    \sum_{b=1}^B \bar A_b \ge \sum_{b=1}^B L_b(p_b).
\end{equation}
Minimizing the right-hand side over all feasible  \(p_b\) gives \(C^{\mathrm{LB}}\).
The Karush-Kuhn-Tucker (KKT) stationarity condition is
\begin{equation}
    \gamma=\frac{2\lambda_b\rho_b}{\big(p_b(1+\rho_b)+1-\rho_b\big)^2}.
\end{equation}
Solving for \(p_b\) and projecting onto \([0,1]\) yields \eqref{eq:lb_water_filling}. Since
\(\sum_b p_b^*(\gamma)\) is monotone in \(\gamma\), the multiplier can be efficiently found by bisection. This completes the proof.



\bibliographystyle{IEEEtran}
\bibliography{reference}


 




\vfill

\end{document}